\newtheorem{theorem}{Theorem}
\newtheorem{claim}{Claim}[section]
\newtheorem{corollary}{Corollary}[section]
\newtheorem{lemma}{Lemma}[section]
\newtheorem{proposition}{Proposition}[section]
\newenvironment{proof}[1][Proof]{\noindent\textbf{#1.} }{\ \rule{0.5em}{0.5em}}
\begin{document}

\title{Spreading Information via Social Networks: \\
An Irrelevance Result\thanks{%
The research reported here was supported by a grant from the National
Science Foundation (SES-2048806). We thank Nageeb Ali and Rahul Krishna for
helpful discussions.}}
\author{Yu Awaya\thanks{%
Department of Economics, University of Rochester, E-mail: yuawaya@gmail.com.}
\ and Vijay Krishna\thanks{%
Department of Economics, Penn State University, E-mail: vkrishna@psu.edu.}}
\date{February 5, 2024}
\maketitle

\begin{abstract}
An informed planner wishes to spread information among a group of agents in
order to induce efficient coordination---say the adoption of a new
technology with positive externalities. The agents are connected via a
social network. The planner informs a seed and then the information spreads
via the network. While the structure of the network affects the rate of
diffusion, we show that the rate of adoption is the \emph{same} for all 
\emph{acyclic} networks.
\end{abstract}

\section{Introduction}

Policymakers often wish to inform the public about various policies and
technologies---a tax credit, a new seed variety, a new digital payment
system---so that the public will avail of or adopt these. The simplest way
to disseminate such information is to just broadcast a public service
announcement (PSA) on television, radio and other media. But there have been
some doubts about the effectiveness of PSAs. For various reasons, people may
pay more attention to information coming from friends and neighbors rather
than mass media.\footnote{%
Banerjee et. al (2023) observe this in the "field" and provide some
behavioral explanations why this might be the case.
\par
{}} Thus in many circumstances it is better to "seed" the information to a
few individuals and then let it spread naturally via the existing social
network. Of course, how quickly information diffuses depends on the network.
At one extreme, information will spread very quickly in a "star"
network---where one individual, say $1$, is directly connected to all others
who are directly connected only to $1$. At the other extreme, it will spread
very slowly in a "line" network---where individual $1$ is connected only to
individual $2,$ who is connected only to one other, say $3$, etc. See Figure %
\ref{F: Starline}. 
%
%
%
%
%
%
%
%
%
%
%
%
%
%
%
%
%
%
%

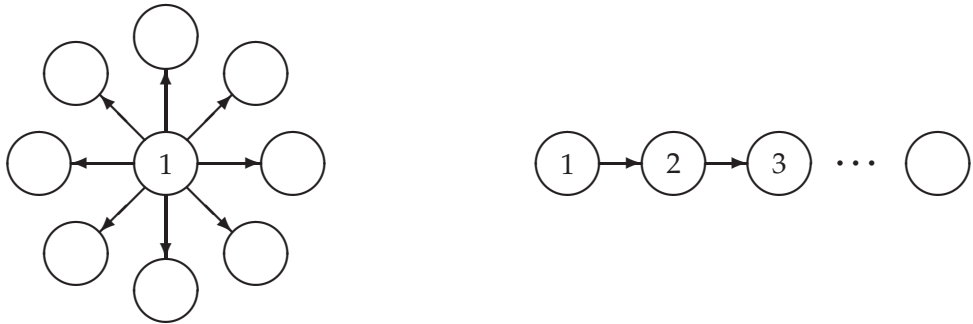
\begin{figure}[t]
\unitlength 2pt
\begin{picture}(198,60)
\linethickness{0.5pt}
\small
\thicklines

\put(50,30){\circle {12}}
\put(50,30){\makebox(0,0)[cc]{$1$}}

\put(50,54){\circle {12}}
\put(50,36){\vector(0,1){12}}

\put(67,47){\circle {12}}
\put(54,34.5){\vector(1,1){8.5}}

\put(74,30){\circle {12}}
\put(56,30){\vector(1,0){12}}

\put(67,13){\circle {12}}
\put(54,25.5){\vector(1,-1){8.5}}

\put(50,6){\circle {12}}
\put(50,24){\vector(0,-1){12}}

\put(33,13){\circle {12}}
\put(46,25.5){\vector(-1,-1){8.5}}

\put(26,30){\circle {12}}
\put(44,30){\vector(-1,0){12}}

\put(33,47){\circle {12}}
\put(46,34.5){\vector(-1,1){8.5}}

\put(126,30){\circle {12}}
\put(126,30){\makebox(0,0)[cc]{$1$}}
\put(132,30){\vector(1,0){8}}

\put(146,30){\circle {12}}
\put(146,30){\makebox(0,0)[cc]{$2$}}
\put(152,30){\vector(1,0){8}}

\put(166,30){\circle {12}}
\put(166,30){\makebox(0,0)[cc]{$3$}}

\multiput(178,30)(3,0){3}{\makebox(0,0)[cc]{\circle* {1}}}

\put(196,30){\circle {12}}

\end{picture}
\caption{Star and  Line Networks}
\label{F: Starline}
\end{figure}%


In this paper, we ask a different question. Suppose that the information
concerns the benefits of a new technology---say, a new digital payment
system---and the policymaker wishes to get the public to adopt the system.
In many such situations, there are significant positive
externalities---adopting a new digital payment system is useful only if
other people do so as well.\footnote{%
See Crouzet, Gupta and Mezzanotti (2023) for evidence of such extenalities
in the adoption of a digital wallet following the Indian demonetization in
2016.} Instead of asking how the network structure affects the rate of
diffusion, we ask how it affects the rate of \emph{adoption}.

Our main finding is that in the class of \emph{acyclic} networks, the
structure of the network and how it is seeded is irrelevant---the rate of
adoption is the \emph{same} for all such networks. In particular, the
adoption rate when information diffuses quickly via the star network is the
same as when it diffuses slowly via the line network. So while the structure
of the network affects both the speed at which information is diffused and,
as we will see, its quality, it does not affect the prospects of efficient
coordinated behavior.

To illustrate our findings, we begin with an example.

\subsection{Example}

A new technology is of uncertain value---it may or may not be useful/viable.
There are three agents who must simultaneously decide whether or not to
adopt the new technology at a cost $c<1$ per person. The gross payoff to an
agent is $\$1$ if and only if the technology turns out to be useful and 
\emph{all} three agents adopt the technology; otherwise, the gross payoff is
zero. Thus, adoption has positive externalities. Let $\rho \in \left(
0,1\right) $ be the prior probability that the technology is useful.

A planner, agent $0$, knows whether or not the technology is useful, and if
it is, sends a message to the agents. If it is not useful, no message is
sent. Thus, anyone who gets the message is sure that the technology is
useful.%
%
%
%
%
%
%
%
%
%
%
%
%
%
%
%
%
%
\begin{figure}[t]
\unitlength 2pt
\begin{picture}(160,70)(-12,-5)
\linethickness{0.5pt}
\small
\thicklines

\put(-6,30){\circle {12}}
\put(-6,30){\makebox(0,0)[cc]{$1$}}
\put(0,30){\line(1,0){10}}

\put(16,30){\circle {12}}
\put(16,30){\makebox(0,0)[cc]{$2$}}
\put(22,30){\line(1,0){10}}

\put(38,30){\circle {12}}
\put(38,30){\makebox(0,0)[cc]{$3$}}

\put(16,5){\makebox(0,0)[cc]{(a)}}

\put(75,54.8){\circle {16}}
\put(75,54.8){\makebox(0,0)[cc]{$0$}}
\put(75,46.8){\vector(0,-1){10.8}}

\put(75,30){\circle {12}}
\put(75,30){\makebox(0,0)[cc]{$1$}}
\put(81,30){\vector(1,0){10}}

\put(97,30){\circle {12}}
\put(97,30){\makebox(0,0)[cc]{$2$}}
\put(103,30){\vector(1,0){10}}

\put(119,30){\circle {12}}
\put(119,30){\makebox(0,0)[cc]{$3$}}

\put(97,5){\makebox(0,0)[cc]{(b)}}

\put(176,54.8){\circle {16}}
\put(176,54.8){\makebox(0,0)[cc]{$0$}}
\put(176,46.8){\vector(0,-1){10.8}}

\put(154,30){\circle {12}}
\put(154,30){\makebox(0,0)[cc]{$1$}}
\put(170,30){\vector(-1,0){10}}

\put(176,30){\circle {12}}
\put(176,30){\makebox(0,0)[cc]{$2$}}
\put(182,30){\vector(1,0){10}}

\put(198,30){\circle {12}}
\put(198,30){\makebox(0,0)[cc]{$3$}}

\put(176,5){\makebox(0,0)[cc]{(c)}}

\end{picture}
\caption{Seeding the Line Network}
\label{F: Three Line}
\end{figure}
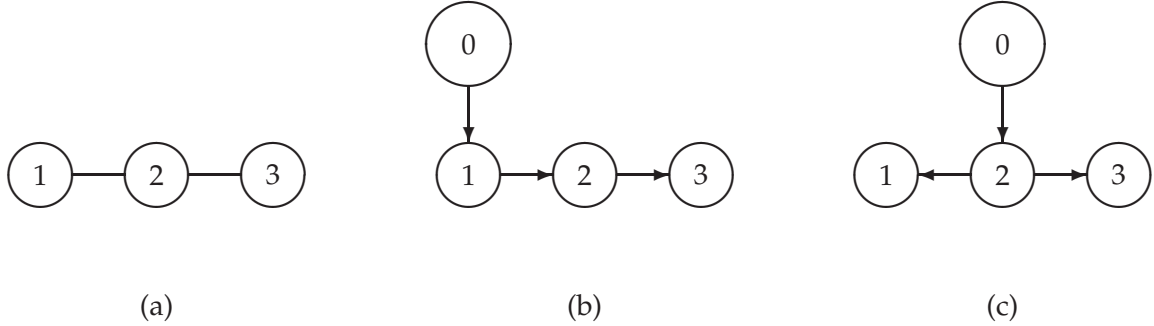%

The three agents are part of single connected social network. Specifically,
they are arranged along a line as in Figure \ref{F: Three Line} (a). Agents
can receive messages from and pass these along to their neighbors. Message
transmission is imperfect, however---at every stage there is a small
probability $\varepsilon >0$ that a message that is sent to a neighbor is
lost. Thus, if the planner sends a message to $1,$ there is only a
probability $1-\varepsilon $ that $1$ will in fact get the message. If $1$
receives the message and sends it on to $2$, then there is only a
probability $1-\varepsilon $ that $2$ will get the message and so on.
Transmission losses occur independently across links.

\paragraph{A1. Seeding the network via $1.$}

Here, if the technology is useful, the planner sends a message to $1,$ which
if received, is sent to $2,$ which if received, is then sent to $3$ (this is
depicted by the arrows in Figure \ref{F: Three Line} (b)). We claim that if
the cost $c\leq \left( 1-\varepsilon \right) ^{2}$, then every agent who is
informed---gets a message---adopts.\footnote{%
For our purposes it is not necessary to specify the exact strategy---that
is, what an agent does if she does not get the message. A detailed
specification of the strategies is in Section \ref{S: Game}.} And if $%
c>\left( 1-\varepsilon \right) ^{2}$, then no agent, informed or not, adopts.

\subparagraph{Case 1: $c\leq \left( 1-\protect\varepsilon \right) ^{2}.$}

First, consider agent $1$ and suppose the others adopt if informed. If $1$
gets a message, then she knows that the technology is useful and so her only
worry is whether all other agents got the message as well. Since messages
are only passed along the line, the probability that agents $2$ and then $3$
are also informed, and so will adopt, is just $\left( 1-\varepsilon \right)
^{2}$ which is greater than the cost.\footnote{%
Since the gross payoff if the technology is useful and everyone adopts is $1$
and $0$ otherwise, this probability is also the gross expected payoff.}
Thus, if informed, it is optimal for agent $1$ to adopt.

Next, consider agent $2$ and as above, suppose the others adopt if informed.
If $2$ gets a message, then she knows that $1$ also got the message and that 
$3$ got the message with probability $1-\varepsilon $. Thus, it is optimal
for agent $2$ to adopt as long as $c\leq 1-\varepsilon ,$ a weaker
requirement than that for agent $1.$

Finally, consider agent $3$ and again suppose others adopt if informed. If $%
3 $ gets a message, then she knows for sure that $1$ and $2$ also got the
message and so she is willing to adopt for all $c\leq 1$.

Thus, if $c\leq \left( 1-\varepsilon \right) ^{2}$ there is an equilibrium
in which every agent adopts if she gets the message. The probability that
all agents adopt the technology when it is useful is just the probability
that the message reaches $3$, that is, $\left( 1-\varepsilon \right) ^{3}.$

\subparagraph{Case 2: $c>\left( 1-\protect\varepsilon \right) ^{2}.$}

In this case, the \emph{unique} equilibrium is one in which no agent ever
adopts.

To see why, note that if $1$ is uninformed, the probability that she assigns
to the event that the technology is useful is small---it is of order $%
\varepsilon $. This is because the only way this can happen is if the
message from the planner to $1$ was lost, an $\varepsilon $ probability
event. When $\varepsilon $ is small, this probability is smaller than the
cost. This means that it is dominated for an uninformed agent $1$ to adopt.

Now from the argument above, $2$ knows that $1$ will not adopt if
uninformed. If $2$ does not get a message, her belief that $1$ is informed
is also of order $\varepsilon $ because the event that $1$ is informed while 
$2$ is not can occur only if the message from $1$ to $2$ was lost, again an $%
\varepsilon $ probability event. This means that it is (iteratively)
dominated for an uninformed $2$ to adopt.

Now $3$ knows that $1$ and $2$ will not adopt if uninformed. If $3$ does not
get a message, then for similar reasons as above, her belief that both $1\ $%
and $2$ got a message is again of order $\varepsilon .$ So it is
(iteratively) dominated for an uninformed $3$ to adopt.

Thus we have argued that it is \emph{iteratively} dominated for every agent
to adopt if she does not get a message.

Now suppose agent $1$ is informed. At best, the other agents will adopt only
if informed and the chance of this is $\left( 1-\varepsilon \right) ^{2}$
and since $c$ exceeds this, it is optimal for $1$ to not adopt even when
informed. Thus, agent $1$ will never adopt.

But now if agent $1$ never adopts, it is optimal for other agents to never
adopt as well.

\paragraph{A2. Seeding the network via $2.$}

Seeding the network via $1$ seems inefficient since the information has to
travel from $1$ to $2$ and then from $2$ to $3.$ Suppose instead that the
planner sends a message to the agent who is "central," that is, $2$. This
message, if received by $2$, is forwarded to $1$ and $3$ simultaneously
(this is depicted by the arrows in Figure \ref{F: Three Line} (c)). We claim
that even though this seems like a better way to disseminate information,
the prospects for efficient coordination are the \emph{same} as when $1$ is
the seed. Again, if $c\leq \left( 1-\varepsilon \right) ^{2},$ then every
informed agent adopts. And if $c>\left( 1-\varepsilon \right) ^{2}$, then no
agent ever adopts.

\subparagraph{Case 1: $c\leq \left( 1-\protect\varepsilon \right) ^{2}.$}

Here an informed agent $2$'s belief that the others will get the message is
just $\left( 1-\varepsilon \right) ^{2}$ which is greater than the cost.
Thus, upon getting a message it is optimal for agent $2$ to adopt if the
others are doing so.

The same calculation for $1$ shows that $1$ is willing to adopt as long as $%
c\leq 1-\varepsilon .$ This is because $1$ knows that $2$ is informed for
sure and that $3$ is informed with probability $1-\varepsilon .$ Thus, it is
optimal for an informed agent $1$ to adopt as long as $c\leq 1-\varepsilon $%
, a weaker requirement than for $2.$ The same is true for agent $3$ since $1$
and $3$ are symmetrically placed.

Thus, it is an equilibrium for every agent to adopt if she is informed. When
the technology is useful, the probability that the message reaches all the
agents is just $\left( 1-\varepsilon \right) ^{3},$ the same as in scenario
A1.

\subparagraph{Case 2: $c>\left( 1-\protect\varepsilon \right) ^{2}.$}

Again, the unique equilibrium is one in which no agent ever adopts.

To see why, note that if $2$ does not get a message, the probability that
she assigns to the event that the technology is useful is again of order $%
\varepsilon $ (it is the same as that assigned by $1$ when $1$ was the seed
in Scenario A1). Since $c$ is greater than this probability, this implies
that it is dominated for an informed $2$ to adopt$.$

Now $1$ knows that $2$ will not adopt if uninformed. As above, if $1$ does
not get a message, her belief that $2$ got a message is also of order $%
\varepsilon .$ This belief is again smaller than the cost. This means that
it is (iteratively) dominated for an uninformed $1$ to adopt. As before, the
same is true for $3$ since $1$ and $3$ are symmetrically placed.

Now suppose agent $2$ gets a message. The chance that the other agents will
adopt is at most $\left( 1-\varepsilon \right) ^{2}$ and since $c$ exceeds
this, it is optimal for $2$ to not adopt even when she gets a message. Thus,
agent $2$ will not adopt whether or not she is informed.

But now if agent $2$ never adopts, it is optimal for $1$ and $3$ to never
adopt as well.

Thus, we see that in this example, seeding information to a more "central"
agent---with more neighbors---does not improve the prospects for
coordination.

\paragraph{B. Broadcasting.}

An alternative is to bypass the social network entirely and "broadcast" the
message---it is sent privately to all agents simultaneously (as in Figure %
\ref{F: Three Broadcast}).%
%
%
%
%
%
%
%
%
%
\begin{figure}[t]
\unitlength 2pt
\begin{picture}(160,50)(-10,20)
\linethickness{0.5pt}
\small
\thicklines

\put(103.4,49.2){\vector(1,-1){14.7}}
\put(90.6,49.2){\vector(-1,-1){14.7}}

\put(97,54){\circle {16}}
\put(97,54){\makebox(0,0)[cc]{$0$}}
\put(97,46){\vector(0,-1){10}}

\put(72,30){\circle {12}}
\put(72,30){\makebox(0,0)[cc]{$1$}}

\put(97,30){\circle {12}}
\put(97,30){\makebox(0,0)[cc]{$2$}}

\put(122,30){\circle {12}}
\put(122,30){\makebox(0,0)[cc]{$3$}}

\end{picture}
\caption{Broadcast}
\label{F: Three Broadcast}
\end{figure}
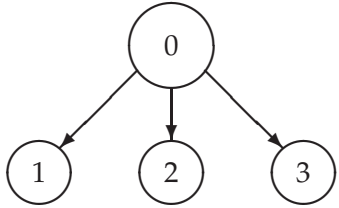
Again, with probability $\varepsilon $, the message to any agent $i$ is lost
and so not heard by the agent. Lost messages occur independently across
agents, each with probability $\varepsilon >0$.

It seems intuitive that directly broadcasting is a better method of
dissemination than letting the information trickle from agent to agent. In
particular, with broadcasting the probability that any agent gets the
information is $1-\varepsilon $ and so is the same for all agents. In
contrast, when the network is seeded via $1$, the probability that agent $2$
gets the message is $\left( 1-\varepsilon \right) ^{2}$ and the probability
that $3$ gets the message is $\left( 1-\varepsilon \right) ^{3}$.

We will show that even though broadcasting provides better information about
the usefulness of the technology, when it comes to engendering efficient
coordination, it is \emph{equivalent} to either of the indirect methods of
dissemination A1 and A2.

\subparagraph{Case 1: $c\leq \left( 1-\protect\varepsilon \right) ^{2}.$}

Again, in this case there is an equilibrium in which every agent who gets
the message adopts the technology. This follows from the fact that for any
informed agent $i$, the probability that the other two agents are also
informed is $\left( 1-\varepsilon \right) ^{2}.$

When the technology is useful, the probability that all agents get the
message is just $\left( 1-\varepsilon \right) ^{3},$ the \emph{same} as that
when the network is seeded via $1$ or $2.$

\subparagraph{Case 2: $c>\left( 1-\protect\varepsilon \right) ^{2}.$}

In this case, again the unique equilibrium is again one in which no one ever
adopts. If agent $i$ does not get a message, her belief that the technology
is useful is again of order $\varepsilon $ which is less than $c$. This
means it is dominated for an uninformed agent to adopt. If agent $i$ does
get a message, the probability that the other two agents will adopt is at
most $\left( 1-\varepsilon \right) ^{2}$ and since $c>\left( 1-\varepsilon
\right) ^{2}$, agent $i$ will not adopt even if informed. This means that
the unique equilibrium is for all agents to never adopt.\bigskip

Thus, in this example we see that how information is
disseminated---indirectly via seeding the network or directly sending it to
each agent---does not affect the prospects for coordination.

In this paper we show that there is nothing special about the example. As in
the example, we study information dissemination in social networks without
cycles---that is, trees. Informally stated, our main result is:\footnote{%
The formal results (Theorem \ref{T: main} below) applies not only to a
single tree but also to networks which are collections of disjoint trees---%
\emph{forests}.}\medskip

\emph{The prospects of efficient coordination are the same} \emph{no matter
how information is disseminated---it is independent both of the structure of
the tree and how it is seeded.}\medskip

Our main result says that if the goal of the planner is to induce efficient
coordinated action---adopting a new technology or product---then the tree
structure is irrelevant. Why is this? Efficient coordination requires not
only that agents be informed about the fundamental uncertainty---whether or
not the technology is useful---but also be informed whether other agents are
informed. It is easy to see that one tree network may be better than another
in conveying information about fundamentals. For instance, in the example
above, let us compare broadcasting (Scenario B) to seeding $1$ and then
letting the information trickle from $1$ to $2$ to $3$ (Scenario A1).
Broadcasting is better than seeding in that every agent has better
information about the fundamentals with broadcasting than seeding. But
seeding is better than broadcasting in revealing whether others are
informed---for instance, if agent $3$ gets a message, she knows that $1$ and 
$2$ did so as well. \footnote{%
A precise result along these lines is Proposition \ref{Blackwell} below.}

Rather than consider a particular game---like the adoption game in the
example---we derive and phrase our results by using the language of
approximate common knowledge. Thus our main result says that the extent of
approximate common knowledge is independent of the structure of the tree
network and how it is seeded. The close connection between approximate
common knowledge and equilibrium behavior in games is well-known (see
Monderer and Samet, 1989, Kajii and Morris, 1997 and Oyama and Takahashi,
2020).

\subsection{Related literature}

The question of diffusion in social networks appears in many
contexts---infectious diseases, product awareness, plans for a revolt, etc.
In most of these situations, the planner is interested in the affecting the
speed and extent of diffusion---either slowing it down in the case of
disease or speeding it up in the other cases. Recently, the question has
drawn the attention of development economists who are interested in
conveying information about various policy initiatives and has been studied
in various contexts---microfinance (Banerjee et al. 2013), immunizations
(Banerjee et al. 2019), planting techniques (Beaman et al. 2021),
demonetization (Banerjee et al. 2023)---by conducting randomized controlled
trials (RCTs). But these issues are not confined to developing countries.
Chetty et al. (2013) find that whether or not people optimally avail of the
earned income tax credit---a large US government transfer program---depends
on the neighborhood they live in.

One of the findings of this line of research is that spreading the
information via existing social networks may be superior to "broadcasting"
the information via media (Banerjee et al. 2023).

One is then naturally led to the question of how best to "seed" the
information by conveying it to a few key agents who spread it via the social
network. Clearly, if one is interested in spreading the information quickly
and widely, the information should be seeded via agents that are
well-connected---that is, central players. But identifying who is central is
daunting task in any reasonable sized network. First, one has to determine
the network---a difficult task itself---and second, to find the central
players in the network. The latter problem is known to be computationally
hard.

In a very interesting paper, Akbarpour, Malladi and Saberi (2023) have
argued that instead of finding the optimal seed, it is better to choose
multiple seeds randomly. The argument is that even if the information is
seeded to non-central agents it will find its way to those that are central
anyway---by definition, the central players are well-connected.

In all of this work, the focus is on the speed at which information spreads
as well as the extent of diffusion. Implicit in this is the assumption that
once an agent is informed, he/she will automatically adopt the new
technology or avail of the policy initiative. This may be the case if the
costs and benefits of adoption do not depend on whether others are doing so
as well. But many new technologies/products are subject to complementarities
in adoption/consumption---that is, network externalities. Crouzet et al.
(2023) document the presence of such externalities in the adoption of a new
digital payment platform in India. Naturally, adopting a digital payment
platform is useful only if others adopt it as well. Such externalities are
also a key feature of the theoretical diffusion model of Sadler (2020).

Our paper departs from the focus on the speed of diffusion. Rather, we are
interested in the likelihood that the new technology---subject to network
externalities---will be adopted once the information has spread. Put another
way, to what extent will the public be able to coordinate adoption? As is
well-known, efficient coordination requires that people know not only
whether or not the digital wallet works and is safe---known as first-order
uncertainty---but whether others know this as well and whether others know
that others know, etc.---higher-order uncertainty. The importance of
considering higher-order uncertainty is the main lesson of Rubinstein's
(1989) E-mail game who shows that it can be a major cause of coordination
failure.\footnote{%
Coles and Shorrer (2012) show that the extreme coordination failure in
Rubinstein's two-player E-mail game can be mitigated in multi-player games
where communication takes place in a hub-and-spoke network. In De Jaegher
(2015) higher-order information is directly communicated to the agents.}

Field experiments by Gottlieb (2016) point to the importance of reducing
higher-order uncertainy in elections in Mali. An interesting finding along
the same lines is reported by George, Gupta and Neggers (2019). They measure
how broadcasting---via text messages---the criminal records of candidates
affects vote shares in Indian elections. In one treatment of an RCT, the
texts just report the criminal records of the candidates and the effect on
vote shares is negligible. In a second treatment, the texts not only report
the criminal records but also say that others are also receiving the same
text. Now the effect on vote shares is measurable---there is roughly a 3\%
shift in votes. Thus, it seems that people are willing to change their
voting behavior only if they think that others will do so as well. In other
words, reducing higher-order uncertainty matters.

The change in focus away from the speed of diffusion to efficient
coordination is the key to our result. Once efficient coordination is the
goal, we find that in acyclic networks, the network structure and who is the
seed becomes irrelevant. In such networks, there is no need to ascertain the
exact structure or the optimal person to choose as the seed.\bigskip\ 

\subparagraph{Organization of the Paper}

The remainder of the paper is organized as follows. The next section
outlines the model as well as the terminology of approximate common
knowledge. Section \ref{S: Main} then derives the main result that in all
acyclic networks the extent of approximate common knowledge is the same.
Section \ref{S: Game} studies the technology adoption game from the
Introduction, as well as other games, and shows that the approximate common
knowledge results of Section \ref{S: Main} have exact counterparts
concerning equilibria of these games. Our results rely on the assumptions
that (a) the network is acyclic; and (b) each tree in the network has a
single seed. In Section \ref{S: Other} we show that the irrelevance result
does not extend if these assumptions are relaxed. We show by example that
there are circumstances in which a cycle can make the situation worse in
terms of the adoption rate. Also, there are circumstances in which a single
seed is better than multiple seeds. Finally, we also consider the
possibility of randomly chosen seeds. We show that choosing seeds at random
can, in some cases, improve the situation. Appendices A and B contain proofs.

\section{Model}

There is an uncertain state of nature $\theta =g$ or $b$ with prior
probabilities $\rho \in \left( 0,1\right) $ and $1-\rho ,$ respectively. A
planner who knows $\theta $ wishes to convey this information to a set of
agents $\mathcal{I}=\left\{ 1,2,...,I\right\} $---the public. The planner
will be labeled as agent $0$.

The agents in $\mathcal{I}$ constitute the nodes of a social network which
is either a \emph{tree }$T$---an undirected connected graph without
cycles---or a disjoint union $T^{1}\cup T^{2}\cup ...\cup T^{R}$ of trees,
that is, a \emph{forest}. Let $F=\left( T^{1},T^{2},...,T^{R}\right) $
denote the forest.\footnote{%
For formal definitions of these and other terms, we refer the reader to the
excellent book by Jackson (2008).}

In state $g$, the planner sends a private message to a \emph{single} node $%
s^{r}$ in each tree $T^{r}$---the \emph{seed} of $T^{r}.$ Let $s=\left(
s^{1},s^{2},...,s^{R}\right) $ denote a \emph{seeding} of the forest. The
message then spreads through each tree as follows.

Fix a particular tree, say $T^{1}$, and let agent $1$ be the seed. If the
seed gets a message, she forwards it to each $i$ in the set of her
neighbors, denoted by $\mathcal{N}\left( 1\right) $. Each of $1$'s neighbors 
$i\in \mathcal{N}\left( 1\right) $ then forwards the message to each of his
neighbors $j\in \mathcal{N}\left( i\right) $ \emph{except} $1.$ Each $j\in 
\mathcal{N}\left( i\right) \backslash \left\{ 1\right\} $ then forwards the
message to each of her neighbors in $\mathcal{N}\left( j\right) $ except $i$
and so on.

In this manner, information spreads throughout the tree. Notice that because
(a) there are no cycles in the underlying undirected network; and (b) no
agent sends a message back to the person she received a message from, it is
the case that now the tree becomes \emph{directed}---there is single
direction of flow of information from seed to all other nodes. Formally, for
every node $i,$ there is a unique agent that immediately precedes $i$ in the
tree and is the only source of information for $i.$ Given a forest $F$ and a
seeding $s,$ let $\mathcal{T}\left( F,s\right) $ denote the resulting
directed tree with the planner, agent $0$, as the root. We will refer to $%
\mathcal{T}$ as the (directed) \emph{information tree}.

The top panel of Figure \ref{F: Seeding} depicts a forest consisting of two
undirected trees. The other two panels show how the choice of different
seeds results in different directed trees. In each case, the arrows depict
the flow of information.%
\begin{figure}[p]
\unitlength 2pt
\begin{picture}(198,270)(-6,-25)
\linethickness{0.5pt}
\small
\thicklines

\put(37,230){\circle {12}}
\put(37,230){\makebox(0,0)[cc]{$1$}}

\put(63,230){\circle {12}}
\put(63,230){\makebox(0,0)[cc]{$2$}}
\put(66,224.5){\line(2,-3){6}}

\put(34,224.5){\line(-2,-3){6}}
\put(40,224.5){\line(2,-3){6}}
\put(43,230){\line(1,0){14}}

\put(24,210.5){\circle {12}}
\put(24,210.5){\makebox(0,0)[cc]{$3$}}

\put(50,210.5){\circle {12}}
\put(50,210.5){\makebox(0,0)[cc]{$4$}}

\put(76,210.5){\circle {12}}
\put(76,210.5){\makebox(0,0)[cc]{$5$}}

\put(124,230){\circle {12}}
\put(124,230){\makebox(0,0)[cc]{$6$}}
\put(144,230){\line(-1,0){14}}

\put(150,230){\circle {12}}
\put(150,230){\makebox(0,0)[cc]{$7$}}

\put(176,230){\circle {12}}
\put(176,230){\makebox(0,0)[cc]{$8$}}
\put(156,230){\line(1,0){14}}

\put(150,210.5){\circle {12}}
\put(150,210.5){\makebox(0,0)[cc]{$9$}}
\put(150,224){\line(0,-1){7.5}}

\put(100,162){\circle {16}}
\put(100,162){\makebox(0,0)[cc]{$0$}}
\put(92.8,158.4){\vector(-2,-1){50}}
\put(107.2,158.4){\vector(3,-2){37.7}}

\put(37,130){\circle {12}}
\put(37,130){\makebox(0,0)[cc]{$1$}}
\put(34,124.5){\vector(-2,-3){6}}
\put(40,124.5){\vector(2,-3){6}}
\put(43,130){\vector(1,0){14}}

\put(24,110.5){\circle {12}}
\put(24,110.5){\makebox(0,0)[cc]{$3$}}

\put(50,110.5){\circle {12}}
\put(50,110.5){\makebox(0,0)[cc]{$4$}}

\put(63,130){\circle {12}}
\put(63,130){\makebox(0,0)[cc]{$2$}}
\put(66,124.5){\vector(2,-3){6}}

\put(76,110.5){\circle {12}}
\put(76,110.5){\makebox(0,0)[cc]{$5$}}

\put(124,130){\circle {12}}
\put(124,130){\makebox(0,0)[cc]{$6$}}
\put(144,130){\vector(-1,0){14}}

\put(150,130){\circle {12}}
\put(150,130){\makebox(0,0)[cc]{$7$}}

\put(176,130){\circle {12}}
\put(176,130){\makebox(0,0)[cc]{$8$}}
\put(156,130){\vector(1,0){14}}

\put(150,110.5){\circle {12}}
\put(150,110.5){\makebox(0,0)[cc]{$9$}}
\put(150,124){\vector(0,-1){7.5}}

\put(100,62){\circle {16}}
\put(100,62){\makebox(0,0)[cc]{$0$}}
\put(92.8,58){\vector(-1,-3){14}}
\put(107.2,58){\vector(2,-3){14.8}}

\put(37,30){\circle {12}}
\put(37,30){\makebox(0,0)[cc]{$1$}}
\put(34,24.5){\vector(-2,-3){6}}
\put(40,24.5){\vector(2,-3){6}}
\put(57,30){\vector(-1,0){14}}

\put(24,10.5){\circle {12}}
\put(24,10.5){\makebox(0,0)[cc]{$3$}}

\put(50,10.5){\circle {12}}
\put(50,10.5){\makebox(0,0)[cc]{$4$}}

\put(63,30){\circle {12}}
\put(63,30){\makebox(0,0)[cc]{$2$}}

\put(76,10.5){\circle {12}}
\put(76,10.5){\makebox(0,0)[cc]{$5$}}
\put(73,15.5){\vector(-2,3){6}}

\put(124,30){\circle {12}}
\put(124,30){\makebox(0,0)[cc]{$6$}}
\put(130,30){\vector(1,0){14}}

\put(150,30){\circle {12}}
\put(150,30){\makebox(0,0)[cc]{$7$}}

\put(176,30){\circle {12}}
\put(176,30){\makebox(0,0)[cc]{$8$}}
\put(156,30){\vector(1,0){14}}

\put(150,10.5){\circle {12}}
\put(150,10.5){\makebox(0,0)[cc]{$9$}}
\put(150,24){\vector(0,-1){7.5}}

\end{picture}
\caption{Undirected Forest to Directed Tree by Seeding}
\label{F: Seeding}
\end{figure}%

Messages can be lost, however. If $i$ forwards a message to her neighbor $j$%
, then there is a probability $\varepsilon >0$ that the message is lost and
not received by $j.$ Conditional on $i$ being informed, the losses of $i$'s
messages to her neighbors are independent. Thus, if $i$ sends messages to
her neighbors $j$ and $k,$ then the probability that both will receive the
message is $\left( 1-\varepsilon \right) ^{2}.$ The same is true for
messages from the planner to a seed---the probability that in state $g$, the
seed receives the message is also $1-\varepsilon .$

Note that if the message from $i$ to $j$ is lost, then $j$ cannot forward it
to anyone and the flow of information to all the nodes that succeed $j$
stops.

In state $b$, no messages are sent by the planner and so there is no flow of
information.

Note that if $i$ receives a message, then he knows for sure that (1) the
state of nature is $g$; and (2) all agents $j$ along the unique path from
the seed to his immediate predecessor also received a message.

\paragraph{Common beliefs}

Let $x_{i}\in \left\{ y,n\right\} $ denote the information available to $%
i\in \mathcal{I}$, where $x_{i}=y$ ("yes") denotes that $i$ received a
message and $x_{i}=n$ ("no") denotes that $i$ did not.

The set of \emph{states of the world }is%
\[
\Omega =\left\{ g,b\right\} \times \left\{ y,n\right\} ^{I} 
\]%
A state of the world $\omega \in \Omega ,$ then determines both the
fundamental state, $g$ or $b,$ as well as which of the agents are informed.
For instance, the state of the world $\left( g,y,y,n\right) $ is one where
the fundamental is $g,$ agents $1$ and $2$ receive the message while agent $%
3 $ does not.

Different network structures and seedings lead to different probability
distributions on $\Omega .$ For instance, if three agents are arranged in a
line as in Figure \ref{F: Three Line}, then state $\left( g,n,y,y\right) $
is impossible if $1$ is the seed, while it has a positive probability of
occurring if $2$ is the seed. In what follows, we will fix the forest $%
F=\left( T^{1},T^{2},...,T^{R}\right) $ and a seeding $s=\left(
s^{1},s^{2},...,s^{R}\right) $ so also the resulting information tree $%
\mathcal{T}.$ All probabilities will be calculated using the resulting
probability distribution $\mathbb{P}_{\mathcal{T}}$ over $\Omega .$ To avoid
notational clutter, in what follows, we will temporarily suppress the
dependence of $\mathbb{P}_{\mathcal{T}}$ and other objects on $\mathcal{T}.$

Following Monderer and Samet (1989), given any event $E\subseteq \Omega $
and probability $p$, the event $B_{i}^{p}\left( E\right) \subseteq \Omega $
consists of states of the world $\omega $ in which $E$ is $p$\emph{-believed}
by $i$ given the information $x_{i}\left( \omega \right) \in \left\{
y,n\right\} $ available to her in state $\omega .$ Formally,%
\[
B_{i}^{p}\left( E\right) =\left\{ \omega \in \Omega :\mathbb{P}\left[ E\mid
X_{i}=x_{i}\left( \omega \right) \right] \geq p\right\} 
\]%
In other words, in any state $\omega \in B_{i}^{p}\left( E\right) ,$ $i$
assigns probability exceeding $p$ to the event $E$ given her information $%
x_{i}\left( \omega \right) .$ We write 
\[
B^{p}\left( E\right) =\cap _{i\in \mathcal{I}}B_{i}^{p}\left( E\right) 
\]%
as the set of states in which $E$ is $p$\emph{-}believed by all the agents.

Now for $\ell =1,2,...$ define $B^{p,\ell }$ recursively by 
\[
B^{p,\ell }\left( E\right) =B^{p}\left( B^{p,\ell -1}\left( E\right) \right) 
\]%
where $B^{p,0}\left( E\right) =E$ and finally,%
\[
C^{p}\left( E\right) =\cap _{\ell \geq 1}B^{p,\ell }\left( E\right) 
\]%
Thus, $C^{p}\left( E\right) $ is the set of states of the world in which $E$
is \emph{common }$p$\emph{-believed}. In other words, (i) everyone assigns
probability exceeding $p$ to the event $E,$ and also (ii) assigns
probability exceeding $p$ to the event that everyone assigns probability
exceeding $p$ to the event $E,$ and also (iii) assigns probability exceeding 
$p$ to the event that everyone assigns probability exceeding $p$ to the
event that everyone assigns probability exceeding $p$ to the event $E,$ and
so on.

Note that $B_{i}^{p}$ is a monotone mapping, that is, $E\subseteq E^{\prime
} $ implies that $B_{i}^{p}\left( E\right) \subseteq B_{i}^{p}\left(
E^{\prime }\right) .$ The same is then true of $B^{p,\ell }$ and $C^{p}.$
Also, if for some $\ell $ it is the case that $B^{p,\ell +1}\left( E\right)
=B^{p,\ell }\left( E\right) ,$ then $C^{p}\left( E\right) =B^{p,\ell }\left(
E\right) .$ Thus, $C^{p}\left( E\right) $ is a fixed point of $B^{p}$.

When $p=1,$ $C^{1}\left( E\right) $ is the set of states in which the event $%
E$ is commonly known. For $p$ close to $1,$ $C^{p}\left( E\right) $ is the
set of states in which $E$ is approximately commonly known.

We emphasize once again that since the probability distribution $\mathbb{P}$
over states depends on the underlying information tree $\mathcal{T}$, the
sets $B_{i}^{p}\left( E\right) $, $B^{p}\left( E\right) $ and $C^{p}\left(
E\right) $ also depend on $\mathcal{T}.$ Later when we want to make this
dependence explicit, we will write $\mathbb{P}_{\mathcal{T}}$ and $C_{%
\mathcal{T}}^{p}\left( E\right) ,$ for instance.

\subparagraph{Notation}

Let 
\[
G=\left\{ \omega \in \Omega :\theta =g\right\} 
\]%
be the set of states of the world with $\theta =g$ and let%
\[
Y_{i}=\left\{ \omega \in \Omega :x_{i}\left( \omega \right) =y\right\} 
\]%
consist of states in which $i$ is informed (gets a message). Since messages
are sent only in state $g,$ $Y_{i}\subset G.$ Finally, let%
\[
Y^{\ast }=\cap _{i\in \mathcal{I}}Y_{i} 
\]%
be the set of states in which every agent is informed. Since $y$ is
conclusive evidence that the $\theta =g,$ in fact, $Y^{\ast }$ consists of a
single state $\omega ^{\ast }=\left( g,y,y,...,y\right) .$

\section{Irrelevance of structure\label{S: Main}}

Rather than considering a specific game, say the technology adoption game
from the Introduction, we begin by showing that the set of common $p$%
-beliefs does not depend on the network or its seeding. It is well-known
that the degree of approximate common knowledge is a fundamental determinant
of equilibrium behavior in incomplete information games (Kajii and Morris,
1997).

Our main result is that the extent of approximate common knowledge is
independent of the underlying information tree $\mathcal{T}=\left(
F,s\right) .$ It depends only on the number of agents $I$, the prior $\rho $
and the error probability $\varepsilon $.

\begin{theorem}
\label{T: main}For any $p$, the event $C_{\mathcal{T}}^{p}\left( G\right) $
in which $G$ is common $p$-believed does not depend on the information tree $%
\mathcal{T}.$ Moreover, the probability $\mathbb{P}_{\mathcal{T}}[C_{%
\mathcal{T}}^{p}\left( G\right) ]$ does not depend on $\mathcal{T}$ either.
\end{theorem}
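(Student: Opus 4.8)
The plan is to compute $C^p_{\mathcal T}(G)$ explicitly and show the answer depends only on $I$, $\rho$, $\varepsilon$, and $p$ — never on the tree structure. First I would observe that because a received message is conclusive evidence of $\theta=g$, an informed agent $i$ has $\mathbb P[G\mid X_i=y]=1$, so $Y_i\subseteq B_i^p(G)$ for every $p\le 1$; and an uninformed agent assigns to $G$ a probability that is small (of order $\varepsilon$) — here I would write out $\mathbb P[G\mid X_i=n]$ in terms of $\rho$ and $\varepsilon$ and the position of $i$ in the tree, and note that for $p$ above a threshold $p^\dagger(\rho,\varepsilon)$ this probability is below $p$, so $B_i^p(G)=Y_i$, while for $p$ below that threshold $B_i^p(G)=\Omega$. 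The dichotomy "$p$ small / $p$ large" drives everything.

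For the low-$p$ regime, $B^p(G)=\Omega$ and hence $C^p(G)=\Omega$ with probability $1$, independent of $\mathcal T$; done. The substantive case is the high-$p$ regime, where $B^p(G)=\cap_i Y_i = Y^\ast = \{\omega^\ast\}$, the single state in which everyone is informed. The key step is then to show that $B^p(\{\omega^\ast\})$ is again either $\{\omega^\ast\}$ or a larger set that still iterates down to $\{\omega^\ast\}$ or to $\varnothing$, and to identify which, uniformly over trees. At $\omega^\ast$ every agent is informed, and an informed agent $i$'s conditional probability of the event $\{\omega^\ast\}=\{$everyone is informed$\}$ equals the probability that every agent other than those on $i$'s path-from-the-root got their message — and here is the crucial combinatorial fact: conditional on $i$ being informed, the messages on every edge not already known to have succeeded are still independent $\text{Bernoulli}(1-\varepsilon)$, and the number of such "not-yet-resolved" edges is exactly $I-1-(\text{length of }i\text{'s path})$... no: more carefully, conditional on $X_i=y$, the edges on the root-to-$i$ path are known to have succeeded, and the remaining $I-d_i$ edges of the directed tree are independent, so $\mathbb P[\omega^\ast\mid X_i=y]=(1-\varepsilon)^{\,\#\{\text{edges not on root-to-}i\text{ path}\}}$. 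Summed appropriately this is where tree-independence could fail — so the real work is to show that the relevant quantity controlling whether the iteration collapses is $\min_i$ of these, and that $\min_i$ is achieved at a leaf, giving $(1-\varepsilon)^{I-1}$ regardless of shape, because every directed tree on $I$ nodes has $I$ edges (counting the planner's edge) and a leaf's path-complement has the same cardinality in every tree. I expect this uniformity argument — pinning down that the "worst" agent's belief about $\omega^\ast$, and about the higher iterates, is tree-invariant — to be the main obstacle.

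Finally I would run the induction: show by induction on $\ell$ that $B^{p,\ell}(G)$ is one of a short list of events ($\Omega$, $\{\omega^\ast\}$, $\varnothing$) whose identity at each $\ell$ is determined purely by comparing $p$ to thresholds that are explicit functions of $\rho$, $\varepsilon$, $I$ — in particular that once the sequence reaches $\{\omega^\ast\}$ it is a fixed point of $B^p$ exactly when $(1-\varepsilon)^{I-1}\ge p$ (the binding leaf constraint) and otherwise drops to $\varnothing$ at the next step. Since all of $\Omega$, $\{\omega^\ast\}$, $\varnothing$ and their probabilities ($1$, $\rho(1-\varepsilon)^{I}$, $0$) are manifestly independent of $\mathcal T$, and the thresholds are too, both conclusions of the theorem follow. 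The monotonicity and fixed-point remarks recorded just before the theorem statement justify stopping the iteration as soon as $B^{p,\ell+1}=B^{p,\ell}$.
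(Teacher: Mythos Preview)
There is a genuine gap in the high-$p$ analysis. You claim that above a tree-independent threshold $p^\dagger(\rho,\varepsilon)$ one has $B^p(G)=\cap_i Y_i=Y^\ast$. But $B_i^p(G)=Y_i$ holds iff $p>\Pr[G\mid N_i]$, and
\[
\Pr[G\mid N_i]=\frac{\rho\bigl(1-(1-\varepsilon)^{d_i}\bigr)}{1-\rho(1-\varepsilon)^{d_i}}
\]
is \emph{increasing} in the depth $d_i$ of $i$. Hence the threshold above which $B^p(G)=Y^\ast$ is $\max_i\Pr[G\mid N_i]$, which depends on the maximum depth of the tree and is therefore tree-dependent (compare broadcast, where every $d_i=1$, to the line, where the terminal agent has $d_I=I$). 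For $p$ in the range $\Pr[G\mid N_1]<p\le\max_i\Pr[G\mid N_i]$, the first iterate $B^p(G)$ is a proper tree-dependent set (e.g.\ $Y_1$ in the line versus $Y^\ast$ under broadcast), and the subsequent iterates $B^{p,\ell}(G)$ pass through further tree-dependent sets before stabilizing. Your proposed induction hypothesis ``$B^{p,\ell}(G)\in\{\Omega,Y^\ast,\varnothing\}$'' therefore fails already at $\ell=1$.

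The paper sidesteps this by working directly with the fixed point $C^p(G)$ rather than iterating from $G$. From $p>\Pr[G\mid N_1]$ (seed) one gets $C^p(G)\cap N_1=\varnothing$, hence $C^p(G)\subseteq Y_1$; then the key lemma $\Pr[Y_{k-1}\mid N_k]<\Pr[G\mid N_1]$ for successive agents along any path lets one propagate $C^p(G)\subseteq Y_k$ node by node down every path from the root, yielding $C^p(G)\subseteq Y^\ast$. This chain argument along paths---using that $C^p(G)$ is a fixed point of $B^p$ and monotonicity---is the missing ingredient in your plan. A smaller slip: the minimum of $\Pr[Y^\ast\mid Y_i]=(1-\varepsilon)^{I-d_i}$ is achieved at a \emph{seed} ($d_i=1$), not at a leaf; your assertion that ``a leaf's path-complement has the same cardinality in every tree'' is false (a leaf in a star has $I-2$ complementary edges, the leaf in a line has none). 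The value $(1-\varepsilon)^{I-1}$ is right, but the reason is simply that every information tree has a seed at depth $1$.
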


\subsection{Proof of Theorem \protect\ref{T: main}}

The proof of Theorem \ref{T: main} is divided into two parts---when the
error probability $\varepsilon $ is small and when it is large.

Let agent $1$ be a seed of some tree in the forest and note that the
probability of $G$ given that $1$ is uninformed is 
\begin{equation}
\Pr \left[ G\mid N_{1}\right] =\frac{\rho \varepsilon }{1-\rho \left(
1-\varepsilon \right) }  \label{Pr[G given N1]}
\end{equation}%
From Lemma \ref{L: Ystar given Yk}, the probability that all other agents
are informed given that $1$ is informed is%
\begin{equation}
\Pr \left[ Y^{\ast }\mid Y_{1}\right] =\left( 1-\varepsilon \right) ^{I-1}
\label{Pr[Ystar given Y1]}
\end{equation}%
Note that these probabilities are the same for \emph{any} seed of \emph{any}
tree in the forest since all seeds receive information directly from the
planner. Thus we simply write $\Pr $ to denote these rather than $\mathbb{P}%
_{\mathcal{T}}.$

Let $\overline{\varepsilon }$ be the unique value of $\varepsilon $ that
equates $\Pr \left[ G\mid N_{1}\right] $ and $\Pr \left[ Y^{\ast }\mid Y_{1}%
\right] $. Such a value exists and is unique since $\Pr \left[ G\mid N_{1}%
\right] $ is an increasing function of $\varepsilon $ while $\Pr \left[
Y^{\ast }\mid Y_{1}\right] $ is a decreasing function.

\subsubsection{Small $\protect\varepsilon $}

When $\varepsilon <\overline{\varepsilon },$ it is the case that

\begin{equation}
\Pr \left[ G\mid N_{1}\right] =\frac{\rho \varepsilon }{1-\rho \left(
1-\varepsilon \right) }<\left( 1-\varepsilon \right) ^{I-1}=\Pr \left[
Y^{\ast }\mid Y_{1}\right]  \label{def epsilon bar}
\end{equation}%
We then have

\begin{proposition}
\label{P: small}If $0<\varepsilon <\overline{\varepsilon },$ then for any
information tree $\mathcal{T}$,%
\[
C_{\mathcal{T}}^{p}\left( G\right) =\left\{ 
\begin{tabular}{ll}
$\Omega $ & if $p\leq \Pr \left[ G\mid N_{1}\right] $ \\ 
$Y^{\ast }$ & if $\Pr \left[ G\mid N_{1}\right] <p\leq \Pr \left[ Y^{\ast
}\mid Y_{1}\right] $ \\ 
$\varnothing $ & if $p>\Pr \left[ Y^{\ast }\mid Y_{1}\right] $%
\end{tabular}%
\right. 
\]
\end{proposition}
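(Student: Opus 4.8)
\medskip
\noindent\textbf{Proof plan.}\
The plan is to express everything through two posterior probabilities that visibly do not depend on $\mathcal{T}$. Write $d_i$ for the depth of agent $i$ in $\mathcal{T}$, so that the planner is the root at depth $0$ and every seed has depth $1$; let $\pi(i)$ be the unique predecessor of $i$ and $N_i=\Omega\setminus Y_i$. A routine Bayes computation gives
\[
\Pr[G\mid N_i]=\frac{\rho\bigl(1-(1-\varepsilon)^{d_i}\bigr)}{1-\rho(1-\varepsilon)^{d_i}},\qquad
\Pr[Y^{\ast}\mid Y_i]=(1-\varepsilon)^{I-d_i},
\]
both increasing in $d_i$, so over all agents they are minimized at the seeds, where they equal $\Pr[G\mid N_1]$ and $\Pr[Y^{\ast}\mid Y_1]=(1-\varepsilon)^{I-1}$. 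Since $B_i^{p}$ depends on $\omega$ only through $x_i(\omega)$, each $B_i^{p}(E)$ is a union of the cells $Y_i,N_i$; combining this with $\Pr[G\mid Y_i]=1$ and $\Pr[Y^{\ast}\mid N_i]=0$ yields the classification: $B_i^{p}(G)=\Omega$ if $p\le\Pr[G\mid N_i]$ and $B_i^{p}(G)=Y_i$ otherwise, while $B_i^{p}(Y^{\ast})=Y_i$ if $p\le\Pr[Y^{\ast}\mid Y_i]$ and $B_i^{p}(Y^{\ast})=\varnothing$ otherwise. In particular, if $p\le\Pr[G\mid N_1]$ then $p\le\Pr[G\mid N_i]$ for all $i$, so $B^{p}(G)=\Omega$; since $B^{p}(\Omega)=\Omega$ this is already a fixed point, whence $C_{\mathcal{T}}^{p}(G)=\Omega$.

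Next assume $p>\Pr[G\mid N_1]$. The heart of the argument is to show $C_{\mathcal{T}}^{p}(G)\subseteq Y^{\ast}=\{\omega^{\ast}\}$, which I would do by proving $C_{\mathcal{T}}^{p}(G)\subseteq Y_i$ for every $i$ by strong induction on $d_i$. For a seed ($d_i=1$) this is immediate, as $C_{\mathcal{T}}^{p}(G)\subseteq B^{p}(G)\subseteq B_i^{p}(G)=Y_i$. For $d_i\ge 2$, use that $C_{\mathcal{T}}^{p}(G)$ is a fixed point of $B^{p}$: any $\omega\in C_{\mathcal{T}}^{p}(G)$ satisfies $\Pr[C_{\mathcal{T}}^{p}(G)\mid x_i(\omega)]\ge p$; if $x_i(\omega)=n$, the induction hypothesis gives $C_{\mathcal{T}}^{p}(G)\subseteq Y_{\pi(i)}$, hence
\[
\Pr[C_{\mathcal{T}}^{p}(G)\mid N_i]\le\Pr[Y_{\pi(i)}\mid N_i]=\frac{\rho\varepsilon(1-\varepsilon)^{d_i-1}}{1-\rho(1-\varepsilon)^{d_i}},
\]
and a one-line cross-multiplication shows the right-hand side is strictly less than $\Pr[G\mid N_1]$ precisely because $(1-\varepsilon)^{d_i-1}<1$ when $d_i\ge 2$ --- contradicting $\omega\in B_i^{p}(C_{\mathcal{T}}^{p}(G))$. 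So $x_i(\omega)=y$, i.e.\ $\omega\in Y_i$, and the induction closes.

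It remains to decide, given $C_{\mathcal{T}}^{p}(G)\subseteq\{\omega^{\ast}\}$, whether the singleton $\{\omega^{\ast}\}$ is itself fixed by $B^{p}$. By the classification, $B^{p}(\{\omega^{\ast}\})=\bigcap_i B_i^{p}(Y^{\ast})$ equals $\{\omega^{\ast}\}$ when $p\le\Pr[Y^{\ast}\mid Y_i]$ for every $i$, i.e.\ when $p\le\Pr[Y^{\ast}\mid Y_1]$, and equals $\varnothing$ once $p>\Pr[Y^{\ast}\mid Y_1]$ (any seed forces it empty). In the first regime, $\{\omega^{\ast}\}\subseteq G$ together with monotonicity of the $B^{p,\ell}$ gives $\{\omega^{\ast}\}\subseteq C_{\mathcal{T}}^{p}(G)$, so $C_{\mathcal{T}}^{p}(G)=Y^{\ast}$; in the second, $C_{\mathcal{T}}^{p}(G)$ is a fixed point strictly contained in $\{\omega^{\ast}\}$, hence empty. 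Since $\Pr[G\mid N_1]<\Pr[Y^{\ast}\mid Y_1]$ exactly when $\varepsilon<\overline{\varepsilon}$, these three parameter ranges are those in the statement, and the argument used nothing about $\mathcal{T}$ beyond ``seeds have depth $1$, all other agents have depth $\ge 2$'' --- which is why the answer is identical for every information tree. The step I expect to be the main obstacle is the inductive one: identifying the right witness event $Y_{\pi(i)}$ (handed over by the hypothesis) to bound $\Pr[C_{\mathcal{T}}^{p}(G)\mid N_i]$, and checking that this conditional probability never reaches $p$.
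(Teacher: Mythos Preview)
Your proposal is correct and follows essentially the same route as the paper's own proof: the paper also splits into the three parameter ranges, uses that $\Pr[G\mid N_i]$ and $\Pr[Y^{\ast}\mid Y_i]$ are minimized at a seed (its Lemmas~\ref{L: G given Nk} and~\ref{L: Ystar given Yk}), and in the middle range runs the very same induction along the path from the root, bounding $\Pr[C^{p}(G)\mid N_i]$ by $\Pr[Y_{\pi(i)}\mid N_i]$ and comparing the latter to $\Pr[G\mid N_1]$ (its Lemma~\ref{L: Yk-1 given Nk}). Your organization by depth rather than by naming agents along a path, and your explicit one-line verification of the key inequality, are cosmetic differences only.
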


\paragraph{Proof}

We consider each range of $p$'s separately.

\subparagraph{Case 1: $p\leq \Pr \left[ G\mid N_{1}\right] .$}

In this case, $p$ is so low that even an uninformed agent $1$ assigns
greater probability than $p$ to $G.$

Lemma \ref{L: G given Nk} now implies that for any agent $i$ in the forest,
seed or not, 
\[
\Pr \left[ G\mid N_{1}\right] \leq \Pr \left[ G\mid N_{i}\right] 
\]%
and so for all $i,$ $p\leq \Pr \left[ G\mid N_{i}\right] .$ This means that
every agent, informed or not, assigns a probability of at least $p$ to $G.$
Formally, 
\[
B_{i}^{p}\left( G\right) =Y_{i}\cup N_{i}=\Omega 
\]%
and since $B^{p}\left( G\right) =\cap _{i\in \mathcal{I}}B_{i}^{p}\left(
G\right) ,$ 
\[
B^{p}\left( G\right) =\Omega 
\]%
But since everyone assigns probability $1$ to $\Omega ,$ it follows that $%
C^{p}\left( G\right) =\Omega .$

\subparagraph{Case 2: $\Pr \left[ G\mid N_{1}\right] <p\leq \Pr \left[
Y^{\ast }\mid Y_{1}\right] .$}

This case is broken up into two steps.

\emph{Step 1}: $\Pr \left[ G\mid N_{1}\right] <p$ implies that $C^{p}\left(
G\right) \subseteq Y^{\ast }.$

To show this step we will argue that for any agent $k,$ $C^{p}\left(
G\right) \cap N_{k}=\varnothing .$ In other words, the event that $G$ is
common $p$-believed cannot include any state in which an agent is uninformed.

Consider the unique path from $0$ to $k$ and suppose (after renaming, if
necessary) that this path consists of agents $1,2,...,k$ such that the
direct predecessor of $k$ is $k-1.$ Note that $0$ is the direct predecessor
of $1.$

Then since $p>\Pr \left[ G\mid N_{1}\right] ,$ $B_{1}^{p}\left( G\right)
\cap N_{1}=\varnothing .$ But since $C^{p}\left( G\right) \subseteq
B^{p}\left( G\right) \subseteq B_{1}^{p}\left( G\right) $, it is also the
case that 
\begin{equation}
C^{p}\left( G\right) \cap N_{1}=\varnothing  \label{CpG int N1}
\end{equation}%
This is because if an uninformed agent $1$ does not assign probability $p$
to $G,$ then the event that $G$ is common $p$-believed cannot include any
state in which $1$ is uninformed.

Now from Lemma \ref{L: Yk-1 given Nk}, $\Pr \left[ Y_{1}\mid N_{2}\right]
<\Pr \left[ G\mid N_{1}\right] $ which is less than $p.$ So $B_{2}^{p}\left(
Y_{1}\right) \cap N_{2}=\varnothing .$ Next (\ref{CpG int N1}) implies that $%
C^{p}\left( G\right) \subseteq Y_{1}$ and since $B_{2}^{p}$ is a monotone
mapping, $B_{2}^{p}\left( C^{p}\left( G\right) \right) \subseteq
B_{2}^{p}\left( Y_{1}\right) .$ Finally, since $C^{p}\left( G\right) $ is a
fixed point of $B^{p}$, $C^{p}\left( G\right) =B^{p}\left( C^{p}\left(
G\right) \right) \subseteq B_{2}^{p}\left( C^{p}\left( G\right) \right) $
and so $C^{p}\left( G\right) \subseteq B_{2}^{p}\left( Y_{1}\right) .$ This
implies that%
\[
C^{p}\left( G\right) \cap N_{2}=\varnothing 
\]%
Proceeding in this way we see that for all agents $j$ along the path $%
1,2,...,k-1,$ $C^{p}\left( G\right) \cap N_{j}=\varnothing $ and so 
\[
C^{p}\left( G\right) \cap N_{k}=\varnothing 
\]%
In other words, the event that $G$ is common $p$-believed cannot include any
state in which $k$ is uninformed.

Since $k$ was arbitrary, we have shown that 
\begin{eqnarray*}
C^{p}\left( G\right) &\subseteq &\cap _{i\in \mathcal{I}}Y_{i} \\
&=&Y^{\ast }
\end{eqnarray*}

\emph{Step 2}: $p\leq \Pr \left[ Y^{\ast }\mid Y_{1}\right] $ implies that $%
Y^{\ast }\subseteq C^{p}\left( G\right) .$

Since $p\leq \Pr \left[ Y^{\ast }\mid Y_{1}\right] ,$ Lemma \ref{L: Ystar
given Yk} implies that for all $i,$ $\Pr \left[ Y^{\ast }\mid Y_{1}\right]
<\Pr \left[ Y^{\ast }\mid Y_{i}\right] ,$ we have that for all $i,$ 
\[
B_{i}^{p}\left( Y^{\ast }\right) =Y_{i} 
\]%
and taking intersections over $i,$ $B^{p}\left( Y^{\ast }\right) =\cap
_{i\in \mathcal{I}}Y_{i}=Y^{\ast }$ and so%
\[
C^{p}\left( Y^{\ast }\right) =Y^{\ast } 
\]

Now since $Y^{\ast }\subseteq G,$ and the $C^{p}$ operator is monotone, $%
C^{p}\left( Y^{\ast }\right) \subseteq C^{p}\left( G\right) $ and so%
\[
Y^{\ast }\subseteq C^{p}\left( G\right) 
\]

\subparagraph{Case 3: $p>\Pr \left[ Y^{\ast }\mid Y_{1}\right] .$}

Now $p$ is so high that $B_{1}^{p}\left( Y^{\ast }\right) =\varnothing $ and
so%
\[
C^{p}\left( Y^{\ast }\right) =\varnothing 
\]%
as well.

From Step 1 of Case 2, we already know that $C^{p}\left( G\right) \subseteq
Y^{\ast }$ and so%
\[
C^{p}\left( G\right) \subseteq C^{p}\left( Y^{\ast }\right) =\varnothing 
\]

This completes the proof. $\blacksquare $

\subsubsection{Large $\protect\varepsilon $}

When $\varepsilon \geq $ $\overline{\varepsilon },$ it is the case that%
\[
\Pr \left[ G\mid N_{1}\right] \geq \Pr \left[ Y^{\ast }\mid Y_{1}\right] 
\]%
We then have

\begin{proposition}
\label{P: large}If $\varepsilon \geq \overline{\varepsilon },$ then for any
information tree,%
\[
C_{\mathcal{T}}^{p}\left( G\right) =\left\{ 
\begin{tabular}{ll}
$\Omega $ & if $p\leq \Pr \left[ G\mid N_{1}\right] $ \\ 
$\varnothing $ & if $p>\Pr \left[ G\mid N_{1}\right] $%
\end{tabular}%
\right. 
\]
\end{proposition}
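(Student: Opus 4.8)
\textbf{Plan for the proof of Proposition \ref{P: large}.}
The strategy is to mimic the case analysis of Proposition \ref{P: small}, but now exploiting the single inequality $\Pr[G\mid N_1]\geq \Pr[Y^\ast\mid Y_1]$ that characterizes the regime $\varepsilon\geq\overline{\varepsilon}$. I would split on whether $p\leq \Pr[G\mid N_1]$ or $p>\Pr[G\mid N_1]$, and handle each exactly as in the small-$\varepsilon$ proof, with the middle (nonempty, proper) case now collapsing because the two thresholds have crossed.

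\textbf{Case $p\leq \Pr[G\mid N_1]$.} This is verbatim Case 1 of Proposition \ref{P: small}: by Lemma \ref{L: G given Nk}, $\Pr[G\mid N_1]\leq \Pr[G\mid N_i]$ for every agent $i$ (seed or not), so $p\leq \Pr[G\mid N_i]$ for all $i$; hence $B_i^p(G)=Y_i\cup N_i=\Omega$, whence $B^p(G)=\Omega$ and therefore $C^p(G)=\Omega$. The hypothesis $\varepsilon\geq\overline{\varepsilon}$ plays no role here, so the argument transfers with no change.

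\textbf{Case $p>\Pr[G\mid N_1]$.} Since $\varepsilon\geq\overline{\varepsilon}$ gives $\Pr[Y^\ast\mid Y_1]\leq \Pr[G\mid N_1]<p$, we are in particular in the range $p>\Pr[Y^\ast\mid Y_1]$, so Case 3 of Proposition \ref{P: small} applies: $B_1^p(Y^\ast)=\varnothing$, hence $C^p(Y^\ast)=\varnothing$. But $p>\Pr[G\mid N_1]$ also puts us in the range covered by Step 1 of Case 2 of Proposition \ref{P: small}, which shows $C^p(G)\subseteq Y^\ast$; applying the monotone operator $C^p$ and using that $C^p$ is idempotent on its fixed point $Y^\ast$, we get $C^p(G)\subseteq C^p(Y^\ast)=\varnothing$. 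Thus $C^p(G)=\varnothing$.

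\textbf{Main obstacle.} There is essentially no new difficulty: the content of the proposition is that in the large-$\varepsilon$ regime the interval $(\Pr[G\mid N_1],\Pr[Y^\ast\mid Y_1]]$ on which $C^p(G)=Y^\ast$ in Proposition \ref{P: small} is empty, so only the two extreme cases survive. The one point to state carefully is that Step 1 of Case 2 of Proposition \ref{P: small} only used the inequality $p>\Pr[G\mid N_1]$ (together with Lemmas \ref{L: G given Nk} and \ref{L: Yk-1 given Nk} and the fixed-point property of $C^p$), not the two-sided bracketing of $p$; once that is noted, the large-$\varepsilon$ case is immediate. I would therefore write the proof as a short remark that re-invokes the relevant steps of the previous proof rather than repeating them.
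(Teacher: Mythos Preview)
Your proposal is correct and matches the paper's own proof essentially verbatim: Case 1 is literally referred back to Case 1 of Proposition \ref{P: small}, and Case 2 invokes Step 1 of Case 2 of Proposition \ref{P: small} to get $C^{p}(G)\subseteq Y^{\ast}$ and then uses $\Pr[Y^{\ast}\mid Y_{1}]\leq \Pr[G\mid N_{1}]<p$ to conclude $B_{1}^{p}(Y^{\ast})=\varnothing$ and hence $C^{p}(G)\subseteq C^{p}(Y^{\ast})=\varnothing$. One small wording fix: in the last step you should say that $C^{p}(G)$ (not $Y^{\ast}$) is a fixed point of $B^{p}$, or simply that $C^{p}$ is idempotent, since $Y^{\ast}$ is not a fixed point in this regime.
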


\paragraph{Proof}

\subparagraph{Case 1: $p\leq \Pr \left[ G\mid N_{1}\right] .$}

Here the proof is the same as in Case $1$ of Proposition \ref{P: small}.

\subparagraph{Case 2: $p>\Pr \left[ G\mid N_{1}\right] .$}

As in Step 1 of Case 2 in the proof of Proposition \ref{P: small},%
\[
C^{p}\left( G\right) \subseteq Y^{\ast } 
\]%
Now since $\Pr \left[ Y^{\ast }\mid Y_{1}\right] \leq \Pr \left[ G\mid N_{1}%
\right] <p$, the probability that $1$ assigns to $Y^{\ast }$ is less than $p$
and so $B_{1}^{p}\left( Y^{\ast }\right) =\varnothing .$ It now follows that 
\[
C^{p}\left( Y^{\ast }\right) =\varnothing 
\]

This completes the proof. $\blacksquare \bigskip $

Propositions \ref{P: small} and \ref{P: large} prove the first part of
Theorem \ref{T: main} since they show that $C_{\mathcal{T}}^{p}\left(
G\right) $ depends only on $\Pr \left[ G\mid N_{1}\right] $ and $\Pr \left[
Y^{\ast }\mid Y_{1}\right] ,$ both probabilities that are independent of the
information tree $\mathcal{T}=\left( F,s\right) .$

The second part of Theorem \ref{T: main} now follows as a simple consequence
of the two propositions.

\begin{proposition}
The probability $\mathbb{P}_{\mathcal{T}}[C_{\mathcal{T}}^{p}\left( G\right)
]$ does not depend on the information tree $\mathcal{T}$.
\end{proposition}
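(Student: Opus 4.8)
The plan is to combine the two preceding propositions with a direct computation of the $\mathbb{P}_{\mathcal{T}}$-mass of each of the three possible values of $C_{\mathcal{T}}^{p}\left( G\right) $.

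First I would observe that Propositions \ref{P: small} and \ref{P: large} already pin $C_{\mathcal{T}}^{p}\left( G\right) $ down to one of the three sets $\Omega $, $Y^{\ast }$, or $\varnothing $, and that which of these obtains is governed entirely by how $p$ compares to the two numbers $\Pr \left[ G\mid N_{1}\right] $ and $\Pr \left[ Y^{\ast }\mid Y_{1}\right] $. By \eqref{Pr[G given N1]} and \eqref{Pr[Ystar given Y1]} these two numbers depend only on $I$, $\rho $, and $\varepsilon $, not on $\mathcal{T}$. So the identity of the set $C_{\mathcal{T}}^{p}\left( G\right) $ is already tree-independent, and it remains only to check that $\mathbb{P}_{\mathcal{T}}$ assigns the same mass to each of these sets regardless of $\mathcal{T}$.

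Next I would treat the three sets in turn. For $\Omega $ and $\varnothing $ there is nothing to do: $\mathbb{P}_{\mathcal{T}}\left[ \Omega \right] =1$ and $\mathbb{P}_{\mathcal{T}}\left[ \varnothing \right] =0$ for every $\mathcal{T}$. The only substantive case is $Y^{\ast }$. Recall that $Y^{\ast }=\left\{ \omega ^{\ast }\right\} $ with $\omega ^{\ast }=\left( g,y,\ldots ,y\right) $, so $\mathbb{P}_{\mathcal{T}}\left[ Y^{\ast }\right] $ is the probability that $\theta =g$ and every agent receives the message. Conditional on $\theta =g$, every agent is informed if and only if every transmission along every edge of the information tree $\mathcal{T}$ succeeds, and these events are independent, each with probability $1-\varepsilon $. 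The key combinatorial point is that $\mathcal{T}$ always has exactly $I$ edges: if the forest $F$ consists of $R$ trees with node counts $n_{1},\ldots ,n_{R}$ summing to $I$, then $\mathcal{T}$ has one edge from the planner to each of the $R$ seeds together with $\sum _{r}\left( n_{r}-1\right) =I-R$ edges inside the trees, for a total of $R+\left( I-R\right) =I$. Hence $\mathbb{P}_{\mathcal{T}}\left[ Y^{\ast }\right] =\rho \left( 1-\varepsilon \right) ^{I}$, which is independent of $\mathcal{T}$.

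Putting these together, $C_{\mathcal{T}}^{p}\left( G\right) $ is the same set for all $\mathcal{T}$, and its $\mathbb{P}_{\mathcal{T}}$-probability equals $1$, $\rho \left( 1-\varepsilon \right) ^{I}$, or $0$ according to the tree-independent comparison of $p$ with $\Pr \left[ G\mid N_{1}\right] $ and $\Pr \left[ Y^{\ast }\mid Y_{1}\right] $, which finishes the proof. I do not expect any real obstacle here; the one point to be careful about is the edge count for the information tree, which is exactly where acyclicity is used—a network containing a cycle would have strictly more edges, and both this argument and the conclusion would break down.
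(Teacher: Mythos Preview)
Your proposal is correct and follows essentially the same approach as the paper: invoke Propositions \ref{P: small} and \ref{P: large} to reduce to the three cases $\Omega$, $Y^{\ast}$, $\varnothing$, then observe that the $\mathbb{P}_{\mathcal{T}}$-mass of each is tree-independent. The only cosmetic difference is that for the $Y^{\ast}$ case the paper cites Lemma \ref{L: Forest Prob} (proved by induction on $I$), whereas you give a direct edge-count argument; these are equivalent, and your version has the minor expository virtue of making explicit where acyclicity enters.
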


\begin{proof}
Propositions \ref{P: small} and \ref{P: large} show that the $C_{\mathcal{T}%
}^{p}\left( G\right) $ does not depend on the structure of the forest and
how it is seeded. It depends only on $\rho ,$ $I$ and $\varepsilon .$

When $C_{\mathcal{T}}^{p}\left( G\right) =\Omega $ or $\varnothing $, the
probabilities are obviously $1$ or $0$, respectively. When $C_{\mathcal{T}%
}^{p}\left( G\right) =Y^{\ast },$ Lemma \ref{L: Forest Prob} implies that
the probability is simply $\left( 1-\varepsilon \right) ^{I}.$
\end{proof}

\subsection{An informativeness perspective}

Some intuition for the irrelevance result can be gleaned by comparing
different information trees using the informativeness criterion of Blackwell
(1951).

Consider two information trees $\mathcal{T}=\left( F,s\right) $ and $%
\mathcal{T}^{\prime }=\left( F^{\prime },s^{\prime }\right) .$ Let $d\left(
i\right) $ denote the number of links between $i$ and the root, agent $0$,
in the information tree $\mathcal{T}$ and let $d^{\prime }\left( i\right) $
denote the analogous number in $\mathcal{T}^{\prime }.$

It is then natural to say that $\mathcal{T}$ \emph{diffuses information
faster} than $\mathcal{T}^{\prime }$, written $\mathcal{T}\succeq _{dif}%
\mathcal{T}^{\prime }$, if there is a permutation of the names of the agents 
$\pi :\mathcal{I}\rightarrow \mathcal{I}$ such that for each $i\in \mathcal{I%
},$ $d\left( i\right) \leq d\left( \pi \left( i\right) \right) .$

We will say that $\mathcal{T}$ is \emph{first-order} more informative than $%
\mathcal{T}^{\prime },$ written $\mathcal{T\succeq }_{FO}\mathcal{T}^{\prime
},$ if there is a permutation $\pi $ such that for each $i\in \mathcal{I},$ $%
i$'s information about $G$ versus $\Omega \diagdown G$ in $\mathcal{T}$ is
Blackwell more informative than $\pi \left( i\right) $'s information about $%
G $ versus $\Omega \diagdown G$ in $\mathcal{T}^{\prime }.$

Similarly, we will say that $\mathcal{T}$ is \emph{second-order }more
informative than $\mathcal{T}^{\prime }$, written $\mathcal{T\succeq }_{SO}%
\mathcal{T}^{\prime },$ if there is a permutation $\pi $ such for each $i\in 
\mathcal{I},$ $i$'s information about $Y^{\ast }$ versus $\Omega \diagdown
Y^{\ast }$ in $\mathcal{T}$ is Blackwell more informative than $\pi \left(
i\right) $'s information about $Y^{\ast }$ versus $\Omega \diagdown Y^{\ast
} $ in $\mathcal{T}^{\prime }.$ The terminology reflects the fact that $%
Y^{\ast }$ is the event that all agents know that the state is $G.$

The following proposition shows that while the diffusion ordering $\succeq
_{dif}$ is the same as the first-order $\mathcal{\succeq }_{FO}$ ranking,
the second-order $\mathcal{\succeq }_{Y^{\ast }}$ ranking runs in the
opposite direction. If $\mathcal{T}$ is better than $\mathcal{T}^{\prime }$
in conveying first-order information, it is worse that $\mathcal{T}^{\prime
} $ in conveying second-order information (and vice versa).

\begin{proposition}
\label{Blackwell}For any two information trees $\mathcal{T}$ and $\mathcal{T}%
^{\prime },$ 
\[
\text{(i) }\mathcal{T}\succeq _{dif}\mathcal{T}^{\prime }
\text{ if and only if }
\mathcal{T}\succeq _{FO}\mathcal{T}^{\prime } 
\]%
and%
\[
\text{(ii) }\mathcal{T}\succeq _{FO}\mathcal{T}^{\prime }\text{ if and only
if }\mathcal{T}\preceq _{SO}\mathcal{T}^{\prime } 
\]
\end{proposition}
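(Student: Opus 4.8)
The plan is to reduce every ordering in the statement to a simple statement about the multiset of path-lengths $\{d(i)\}_{i\in\mathcal I}$ in the information tree. The key observation is that, for a fixed agent $i$ at distance $d(i)$ from the root, her information structure is completely pinned down by $d(i)$: conditional on $\theta=g$, agent $i$ receives a message with probability $(1-\varepsilon)^{d(i)}$ and nothing otherwise, while conditional on $\theta=b$ she receives nothing for sure. So $i$'s experiment (the pair of conditional distributions of $X_i$ over $\{y,n\}$ given $G$ and given $\Omega\setminus G$) is a two-outcome experiment indexed by the single parameter $q_i:=(1-\varepsilon)^{d(i)}\in(0,1]$, with likelihood-ratio structure ``$y$ is conclusive for $G$; $n$ is weak evidence for $b$.'' Two such experiments are Blackwell-comparable iff their parameters are ordered, and larger $q_i$ (i.e.\ \emph{smaller} $d(i)$) is \emph{more} informative about $G$ versus $\Omega\setminus G$, because a larger $q_i$ makes the informative signal $y$ more frequent without changing its conclusiveness. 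This is a routine garbling argument: the less-informative experiment is obtained from the more-informative one by, upon seeing $y$, relabelling it to $n$ with the appropriate probability. I would state this as a preliminary lemma and then the rest is bookkeeping.

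First I would prove part (i). By the preliminary lemma, $i$'s first-order experiment in $\mathcal T$ is more informative than $\pi(i)$'s first-order experiment in $\mathcal T'$ iff $(1-\varepsilon)^{d(i)}\ge(1-\varepsilon)^{d'(\pi(i))}$, i.e.\ iff $d(i)\le d'(\pi(i))$ (using $0<1-\varepsilon<1$, so $t\mapsto(1-\varepsilon)^t$ is strictly decreasing). Therefore ``there exists $\pi$ with $d(i)\le d'(\pi(i))$ for all $i$'' — which is exactly the definition of $\mathcal T\succeq_{dif}\mathcal T'$ — is verbatim the definition of $\mathcal T\succeq_{FO}\mathcal T'$. (I note there is a typo in the displayed definition of $\succeq_{dif}$: it should read $d(\pi(i))\le d'(\pi(i))$ or, after the obvious renaming, $d(i)\le d'(\pi(i))$; with $F'=F$, $s'=s$ in mind the intended meaning is unambiguous, and I would simply use the corrected reading.)

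Next, part (ii). Here I need the second-order experiment: agent $i$'s information about $Y^\ast$ versus $\Omega\setminus Y^\ast$. The crucial fact is that if $i$ is informed then, by the structure of the tree, every agent on the path from the root to $i$ is informed; and conditional on $i$ being informed, whether the \emph{remaining} $I-1-($number strictly between root and $i)$ agents are all informed happens with a probability that again only depends on how many links must still fire — but in fact the cleanest route is: conditional on $Y_i$, the event $Y^\ast$ has probability $(1-\varepsilon)^{\,\#\{\text{edges not on the root-to-}i\text{ path}\}}=(1-\varepsilon)^{(I-1)-(d(i)-1)}=(1-\varepsilon)^{I-d(i)}$, exactly as in Lemma \ref{L: Ystar given Yk} generalized. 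So $i$'s second-order experiment is the two-outcome experiment with parameter $\tilde q_i := (1-\varepsilon)^{I-d(i)}$: seeing $y$ makes $Y^\ast$ have probability $\tilde q_i$, seeing $n$ makes it have probability $0$. By the preliminary lemma again, this is \emph{more} informative about $Y^\ast$ when $\tilde q_i$ is \emph{larger}, i.e.\ when $d(i)$ is \emph{larger}. Hence $\mathcal T\succeq_{SO}\mathcal T'$ iff there is $\pi$ with $d(i)\ge d'(\pi(i))$ for all $i$, which (comparing multisets of a fixed finite size) is equivalent to the reversed inequality $d(i)\le d'(\pi^{-1}(i))$ for all $i$, i.e.\ to $\mathcal T\preceq_{dif}\mathcal T'$, which by part (i) is $\mathcal T\preceq_{FO}\mathcal T'$. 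This gives (ii).

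The main obstacle is the preliminary lemma comparing two-outcome experiments of the stated special ``one conclusive signal'' form, and being careful that the correct direction (more frequent informative signal $=$ more informative experiment) holds for \emph{both} the $G$ partition and the $Y^\ast$ partition — in both cases the informative realization ($y$) is conclusive for one cell, so the same garbling works. A secondary point I would be careful about is the quantifier manipulation ``$\exists\pi: d(i)\ge d'(\pi(i))\ \forall i$ iff $\exists\sigma: d(i)\le d'(\sigma(i))\ \forall i$'': this is true because both sides are equivalent to the statement that the sorted sequence $d(\cdot)$ dominates the sorted sequence $d'(\cdot)$ componentwise (respectively is dominated by it), using that $\mathcal I$ is finite and $\pi$ ranges over all bijections; I would record this as a one-line combinatorial remark rather than belabor it.
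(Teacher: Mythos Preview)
Your approach is essentially the same as the paper's: both reduce each informativeness ordering to a comparison of the depths $d(i)$, using that in a two-state two-signal experiment with one conclusive signal the Blackwell order is determined by the single free parameter. The paper phrases this via mean-preserving spreads of posteriors (invoking Lemmas \ref{L: G given Nk}, \ref{L: Ystar given Yk}, \ref{L: Pr Y star given Y1}), you phrase it via an explicit garbling; these are equivalent and equally clean.

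There is, however, a genuine slip in your final combinatorial step. You write that ``$\exists\pi:\ d(i)\ge d'(\pi(i))\ \forall i$'' is equivalent to ``$\exists\sigma:\ d(i)\le d'(\sigma(i))\ \forall i$,'' and justify this by saying both correspond to componentwise domination of sorted sequences. That equivalence is \emph{false}: take $d=(1,2)$ and $d'=(3,4)$; the second condition holds (identity works) but the first fails. What you actually need is much simpler: from $d(i)\ge d'(\pi(i))$ for all $i$, substitute $j=\pi(i)$ to get $d'(j)\le d(\pi^{-1}(j))$ for all $j$, which is precisely the definition of $\mathcal{T}'\succeq_{dif}\mathcal{T}$, i.e.\ $\mathcal{T}\preceq_{dif}\mathcal{T}'$. (Note you also transposed $d$ and $d'$ in your displayed ``reversed inequality $d(i)\le d'(\pi^{-1}(i))$''; it should read $d'(i)\le d(\pi^{-1}(i))$.) So the detour through sorted-sequence domination is both unnecessary and, as you stated it, incorrect; just use $\sigma=\pi^{-1}$ directly and the argument goes through.
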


\begin{proof}
First, given any permutation $\pi ,$ note that $d\left( i\right) \leq
d\left( \pi \left( i\right) \right) $ if and only if $i$'s information about 
$G$ is Blackwell superior to $\pi \left( i\right) $'s information. This is
because Lemma \ref{L: G given Nk} implies that $\Pr \left[ G\mid N_{i}\right]
\leq \Pr \left[ G\mid N_{\pi \left( i\right) }\right] $ whereas $\Pr \left[
G\mid Y_{i}\right] =\Pr \left[ G\mid Y_{\pi \left( i\right) }\right] =1$
since any $Y_{j}$ is conclusive evidence that $\theta =g.$ Thus, $d\left(
i\right) \leq d\left( \pi \left( i\right) \right) $ if and only if $i$'s
posterior beliefs about $G$ are a mean-preserving spread of $\pi \left(
i\right) $'s beliefs about $G$ versus $\Omega \diagdown G.$\footnote{%
In experiments with two "states," $G$ and $\Omega \diagdown G,$ and two
"signals," $Y_{i}$ and $N_{i},$ this is sufficient for ranking the
information in terms of the Blackwell criterion.} (i) now follows
immediately.

Second, $d\left( i\right) \leq d\left( \pi \left( i\right) \right) $ if and
only if $\pi \left( i\right) $'s second-order information is Blackwell
superior to $i$'s information. In the latter case, the Blackwell experiment
is well-defined since the agents have a common prior about $Y^{\ast }$ given
by Lemma \ref{L: Forest Prob}. Lemma \ref{L: Ystar given Yk} and Lemma \ref%
{L: Pr Y star given Y1} imply that $\Pr \left[ Y^{\ast }\mid Y_{i}\right]
<\Pr \left[ Y^{\ast }\mid Y_{\pi \left( i\right) }\right] $ whereas, by
definition, $\Pr \left[ Y^{\ast }\mid N_{i}\right] =\Pr \left[ Y^{\ast }\mid
N_{\pi \left( i\right) }\right] =0$. Thus, $\pi \left( i\right) $'s
posterior beliefs about $Y^{\ast }$ are a mean-preserving spread of $i$'s
beliefs about $Y^{\ast }$ versus $\Omega \diagdown Y^{\ast }.$

Now from (i), $\mathcal{T\succeq }_{FO}\mathcal{T}^{\prime }$ if and only if 
$\mathcal{T}\succeq _{dif}\mathcal{T}^{\prime }$ and so there exists a
permutation $\pi $ such that for each $i,$ $d\left( i\right) \leq d\left(
\pi \left( i\right) \right) .$ Now the argument above shows that this is
equivalent to $\mathcal{T}\preceq _{SO}\mathcal{T}^{\prime }.\bigskip $
\end{proof}

The proposition establishes that there is a trade-off between the quality of
information about $G$ and the quality of information about $Y^{\ast }.$
While this trade-off by itself is insufficient to establish our irrelevance
result, it does offer some intuition why the irrelevance might hold. In
particular, the notion of common $p$-belief also employs third- and
higher-order information.

\section{Irrelevance result for games\label{S: Game}}

In this section we show that for many games of interest, the results of the
Section \ref{S: Main} concerning approximate common knowledge have \emph{%
exact} counterparts concerning equilibria of certain games. We first return
to the technology adoption game from the Introduction.

\subsection{Technology adoption game}

Recall that in the adoption game, each of $I$ agents can decide to adopt a
new technology (choose action $a_{i}=1$) or not ($a_{i}=0$). The cost of
adoption is $c$ per person. Adoption yields a gross payoff of $1$ if
everyone else adopts and the state is $g.$ Otherwise, the gross payoff is
zero.

Consider a forest $F=\mathcal{(}T^{1},T^{2},...,T^{R})$ and a seeding $%
s=\left( s^{1},s^{2},...,s^{R}\right) ,$ where $s^{r}$ is the unique seed of
tree $T^{r}.$ Again, denote by $\mathcal{T}$ the resulting (directed)
information tree. Let $\mathcal{E}\left( \mathcal{T}\right) $ be the set of
(possibly mixed) equilibria of the adoption game in which the information to
the agents comes via the network and seeding pair $\mathcal{T}=\left(
F,s\right) $.

The counterpart of Proposition \ref{P: small} is for any $\mathcal{T}$,

\begin{proposition}
\label{P: eqm small}Let agent $1$ be a seed of some tree. If $0<\varepsilon <%
\overline{\varepsilon },$ then the highest equilibrium probability that
everyone adopts is $\mathbb{P}_{\mathcal{T}}\left[ C_{\mathcal{T}}^{p}\left(
G\right) \right] $ where $p=c.$ Precisely, for any $\left( F,s\right) $ 
\[
\max_{\sigma \in \mathcal{E}\left( \mathcal{T}\right) }\mathbb{P}_{\mathcal{T%
}}\left[ \boldsymbol{a}=\mathbf{1\mid }\sigma \right] =\left\{ 
\begin{tabular}{cl}
$1$ & if $c\leq \Pr \left[ G\mid N_{1}\right] $ \\ 
$\Pr \left[ Y^{\ast }\right] $ & if $\Pr \left[ G\mid N_{1}\right] <c\leq
\Pr \left[ Y^{\ast }\mid Y_{1}\right] $ \\ 
$0$ & if $c>\Pr \left[ Y^{\ast }\mid Y_{1}\right] $%
\end{tabular}%
\right. 
\]
\end{proposition}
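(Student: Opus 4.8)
The plan is to establish both a lower bound (exhibiting equilibria that achieve the stated probabilities) and a matching upper bound (showing no equilibrium can do better), treating the three ranges of $c$ separately and leveraging the connection between best responses in the adoption game and $p$-belief with $p=c$.

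First I would record the basic incentive structure: because the gross payoff is $1$ exactly when $\theta=g$ and \emph{all} agents adopt, an agent $i$ with information $x_i$ who conjectures opponents' strategies $\sigma_{-i}$ has expected gross benefit from adopting equal to $\mathbb{P}_{\mathcal{T}}[\,\theta=g \text{ and all } j\neq i \text{ adopt}\mid X_i=x_i\,]$, so adoption is a best response iff this quantity is $\geq c$ (strictly better iff $>c$). In particular, the event ``everyone adopts'' must be an event $A\subseteq\Omega$ that is measurable with respect to the agents' information, and in any pure equilibrium $A$ is self-referential: each $i$ adopts on $A$ only if $\mathbb{P}[\,G\cap A\mid X_i\,]\geq c$ on the relevant information cell. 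This is precisely the fixed-point structure of the $C^p$ operator with $p=c$, which is why $C_{\mathcal{T}}^c(G)$ governs the answer.

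For the \textbf{lower bound}: in Case $c\leq\Pr[G\mid N_1]$, Case 1 of Proposition~\ref{P: small} gives $C^c(G)=\Omega$, and I would check that ``everyone always adopts'' is an equilibrium — an uninformed agent $i$ has $\mathbb{P}[G\mid N_i]\geq\Pr[G\mid N_1]\geq c$ by Lemma~\ref{L: G given Nk}, and conditional on $G$ every other agent adopts for sure, so adopting is optimal; hence $\mathbb{P}[\boldsymbol a=\mathbf 1]=1$. In the middle range, I would use the strategy ``adopt iff informed'': Step 2 of Case 2 of Proposition~\ref{P: small} already verified that for every $i$, $B_i^c(Y^*)=Y_i$, i.e. an informed agent assigns probability $\geq c$ to $Y^*$ (everyone informed), and on $Y^*$ everyone else adopts, so adoption is a best response when informed; when uninformed, $\mathbb{P}[G\mid N_i]<p$ need not hold for all $i$, but I only need that not-adopting is a best response, which holds because if $i$ doesn't adopt then the event ``all adopt'' is empty, making the gross benefit $0<c$ — so this is an equilibrium, and it induces $\boldsymbol a=\mathbf 1$ exactly on $Y^*$, with probability $\Pr[Y^*]=(1-\varepsilon)^I$ by Lemma~\ref{L: Forest Prob}. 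In the high range $c>\Pr[Y^*\mid Y_1]$, the claimed maximum is $0$, so no construction is needed there.

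For the \textbf{upper bound}, which I expect to be the main obstacle: I must show that in \emph{every} (possibly mixed) equilibrium $\sigma$, the event $\{\boldsymbol a=\mathbf 1\}$ has probability at most the stated value. The clean approach is to show that, up to a probability-zero set, $\{\boldsymbol a=\mathbf 1\}\subseteq C_{\mathcal{T}}^c(G)$ — that the set of states where all agents adopt is common $c$-believed — and then invoke Theorem~\ref{T: main}/Proposition~\ref{P: small} to bound its probability by $\mathbb{P}_{\mathcal{T}}[C_{\mathcal{T}}^c(G)]$, which is $1$, $\Pr[Y^*]$, or $0$ in the three cases. For pure strategies this is the standard Monderer--Samet argument: if agent $i$ adopts with positive probability on some information cell, then on that cell she assigns probability $\geq c$ to $G$ \emph{and} to all others adopting, hence to $G$; iterating over all agents shows $\{\boldsymbol a=\mathbf 1\}$ (intersected with its positive-probability part) is contained in $B^{c,\ell}(G)$ for every $\ell$, hence in $C^c(G)$. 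The subtlety is the ``possibly mixed'' qualifier: a mixed equilibrium need not make $\{\boldsymbol a=\mathbf 1\}$ information-measurable. I would handle this by passing to the event $D$ on which every agent adopts with conditional probability one given her own information — since $\{\boldsymbol a=\mathbf 1\}\subseteq D$ up to null sets (if $i$ randomizes strictly on her cell, then conditional on that cell the probability that \emph{all} agents including $i$ adopt is bounded away from... actually one shows $\mathbb{P}[\{\boldsymbol a=\mathbf 1\}\mid X_i]\le \mathbb{P}[i\text{ adopts}\mid X_i]$, and summing the slack shows $\{\boldsymbol a=\mathbf1\}$ and $D$ differ by a null set), and $D$ \emph{is} information-measurable and self-referential in the required sense, so $D\subseteq C^c(G)$ by the fixed-point argument; then $\mathbb{P}[\boldsymbol a=\mathbf 1]\le\mathbb{P}[D]\le\mathbb{P}[C^c(G)]$. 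Combining the two bounds gives equality in each range, completing the proof; I would also note that since $Y^*=\{\omega^*\}$ is a single state, ``$\boldsymbol a=\mathbf 1$ on $Y^*$'' is automatic once we know the equilibrium prescribes adoption at $\omega^*$.
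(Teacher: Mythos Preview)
Your overall strategy---exhibit equilibria for the lower bound and bound the all-adopt event by $C_{\mathcal T}^{c}(G)$ for the upper bound---is legitimate and is exactly the alternative route the paper mentions but deliberately avoids (see the remark after the proof: the authors instead give a direct, self-contained argument that mirrors Proposition~\ref{P: small}, using iterated dominance along paths via Lemmas~\ref{L: G given Nk} and~\ref{L: Yk-1 given Nk} for the upper bound in Case~2 and a simple ``agent~$1$ won't adopt even when informed'' argument for Case~3). Your approach buys generality and a clean link to the belief-operator machinery; the paper's buys transparency and avoids the measurability issues that trip you up below.

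There are two concrete gaps. First, in your lower-bound check for Case~2, the sentence ``if $i$ doesn't adopt then the event `all adopt' is empty, making the gross benefit $0<c$'' does not verify that not-adopting is a best response for an uninformed agent: you have computed the payoff from \emph{not} adopting (which is $0$), but the relevant comparison is with the gross benefit of \emph{adopting}, namely $\Pr[\,G\cap\bigcap_{j\neq i}Y_j\mid N_i\,]$, and you must show this is $\le c$. The paper handles this by first proving (Step~1) that adoption is iteratively dominated for every uninformed agent, which simultaneously gives the upper bound and certifies the equilibrium.

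Second, and more seriously, your treatment of mixed equilibria is incorrect. You pass to the set $D=\{\omega:\text{every }j\text{ adopts with probability }1\}$ and assert $\{\boldsymbol a=\mathbf 1\}\subseteq D$ up to null sets; this is false---if some agent strictly mixes at $\omega$ with $\alpha_j(\omega)\in(0,1)$, then $\omega\notin D$ yet $\Pr[\boldsymbol a=\mathbf 1\mid\omega]=\prod_j\alpha_j(\omega)>0$, and this contributes positively to $\Pr[\boldsymbol a=\mathbf 1]$. The correct auxiliary set is $A=\bigcap_i A_i$ with $A_i=\{\omega:\alpha_i(x_i(\omega))>0\}$: the incentive constraint at any $\omega\in A_i$ gives $\Pr[G\cap A\mid X_i]\ge c$ (using $\alpha_j\le\mathbf 1_{A_j}$ and that $A_i$ is $X_i$-measurable), so $G\cap A\subseteq B^c(G\cap A)$ is $p$-evident and hence $G\cap A\subseteq C^c(G)$; since $\Pr[\boldsymbol a=\mathbf 1]\le\Pr[A]$ and (in Cases~2--3) $A\subseteq G$ because an uninformed seed strictly prefers not to adopt, the bound follows. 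With $D$ in place of $A$ the argument simply does not go through.
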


\paragraph{Proof}

\subparagraph{Case 1: $c\leq \Pr \left[ G\mid N_{1}\right] .$}

In this case, there is an equilibrium in which everyone adopts regardless of
whether he is informed or not. To see this, note that if everyone but $i$
always adopts, then the only uncertainty facing any agent is whether the
fundamental state is $g$ or $b.$ Lemma \ref{L: G given Nk} implies that for
all $i,$ $\Pr \left[ G\mid N_{i}\right] \geq \Pr \left[ G\mid N_{1}\right]
\geq c$, every agent is willing to adopt even if uninformed. Since everyone
adopts regardless of information, the probability that everyone adopts is $%
1. $

\subparagraph{Case 2: $\Pr \left[ G\mid N_{1}\right] <c\leq \Pr \left[
Y^{\ast }\mid Y_{1}\right] $.}

In this case, there is an equilibrium in which everyone adopts if and only
if informed. Moreover, there is no equilibrium in which an agent adopts with
positive probability when uninformed.

There are two steps to the argument.

\emph{Step 1}: $\Pr \left[ G\mid N_{1}\right] <c$ implies that no uninformed
agent adopts.

Consider the unique path from $0$ to $k$ and suppose (after renaming, if
necessary) that this path consists of agents $1,2,...,k$ such that the
direct predecessor of $j$ is $j-1.$

Then since $c>\Pr \left[ G\mid N_{1}\right] ,$ an uninformed agent $1$ does
not adopt even if everyone else adopts. In other words, adopting is
dominated for an uninformed agent $1.$

Now from Lemma \ref{L: Yk-1 given Nk}, $\Pr \left[ Y_{1}\mid N_{2}\right]
<\Pr \left[ G\mid N_{1}\right] $ which is less than $c.$ In other words,
adopting is iteratively dominated for an uninformed agent $2.$ Proceeding in
this way we see that for all agents $j$ along the path $1,2,...,k,$ adopting
is iteratively dominated for an uninformed agent $j.$ Since $k$ was
arbitrary, we have shown that adopting is iteratively dominated for all
agents.

\emph{Step 2}: $c\leq \Pr \left[ Y^{\ast }\mid Y_{1}\right] $ implies that
if all other agents adopt when informed, it is a best response for an
informed agent $i$ to do so as well.

To see why, suppose all agents but $i$ adopt when informed. Since $c\leq \Pr %
\left[ Y^{\ast }\mid Y_{1}\right] ,$ Lemma \ref{L: Ystar given Yk} implies
that for all $i,$ $c<\Pr \left[ Y^{\ast }\mid Y_{i}\right] ,$ and so it is a
best response for agent $i$ to adopt as well. Thus, there exists an
equilibrium in which everyone adopts if and only if informed.

In this case, the probability that everyone adopts is just the probability
of $Y^{\ast }.$ Because of Step 1, no equilibrium can involve adopting with
positive probability when uninformed. Thus, the equilibrium in which every
informed agent adopts gives the highest probability of adoption.

\subparagraph{Case 3: $c>\Pr \left[ Y^{\ast }\mid Y_{1}\right] .$}

In this case, the unique equilibrium is one in which no one ever adopts.

To see why, note that the cost is so high that even if everyone else adopts
only if informed, it is a best response for an informed agent $1$ to not
adopt. Thus, agent $1$ will never adopt.

This implies that no agent will ever adopt. Thus, the only equilibrium is
one in which no one ever adopts. Of course, the probability of adoption is
then $0.\bigskip $

This completes the proof. $\blacksquare \bigskip $

Proposition \ref{P: eqm small} can also be derived as a consequence of
Proposition \ref{P: small} using some results from Kajii and Morris (1997)
and Oyama and Takahashi (2020). We have chosen to provide a self-contained
proof of the former as it emphasizes the parallel nature of the proofs of
the two propositions. A counterpart of Proposition \ref{P: large} can
similarly be derived.

\subparagraph{Payoffs}

What about agents' payoffs in the technology adoption game? It is easy to
see that while the maximum equilibrium probability of adoption is
independent of the information tree $\mathcal{T},$ agents' payoffs do depend
on $\mathcal{T}.$ This is easily verified in the three-agent example in the
Introduction. Suppose $\varepsilon $ is small and $c$ is in the intermediate
range. When $1$ is the seed, the expected payoffs are $u_{i}=\rho \left(
1-\varepsilon \right) ^{3}-\left( 1-\varepsilon \right) ^{i}c.$ Note that $%
u_{1}<u_{2}<u_{3}$ and so the agent with the worst information about $\theta 
$ has the highest expected payoff. With broadcasting, all three agents have
the same payoff $\rho \left( 1-\varepsilon \right) ^{3}-\left( 1-\varepsilon
\right) c.$

It is also the case that the equilibria identified in Proposition \ref{P:
eqm small} not only maximize the probability that everyone adopts, but also
Pareto dominate all other equilibria in terms of payoffs.

\begin{corollary}
\label{C: Pareto}In each case, the equilibrium from Proposition \ref{P: eqm
small} that maximizes the probability that everyone adopts also Pareto
dominates all other equilibria.
\end{corollary}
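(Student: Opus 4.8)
The plan is to go case by case, following the division in Proposition \ref{P: eqm small}, and in each case exhibit the candidate equilibrium and check that every agent's payoff there is at least as large as in any other equilibrium. The key structural fact to exploit is that the adoption game has \emph{monotone} payoff externalities: an agent's payoff from adopting is weakly increasing in the set of other agents who adopt, and the payoff from not adopting is constant (equal to zero). Hence if we can show that the candidate equilibrium has the (weakly) highest adoption probability profile ``coordinate-wise'' relative to any other equilibrium, Pareto dominance should follow. I would first record this monotonicity observation as the engine of the argument.

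For Case 1 ($c\le\Pr[G\mid N_1]$), the candidate equilibrium has everyone adopting always. Here an agent who adopts gets gross payoff $1$ exactly on the event $G$ (since all others adopt), so $u_i=\rho-c>0$ for all $i$. In any other equilibrium an agent adopts on a subset of states and the others adopt on subsets too, so by the monotonicity observation each agent's realized gross payoff is pointwise no larger; moreover a non-adopting agent earns $0$, so each agent's expected payoff is at most $\rho-c$. This gives Pareto dominance in Case 1. For Case 3 ($c>\Pr[Y^\ast\mid Y_1]$), the unique equilibrium is no adoption, so there is nothing to prove.

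The substantive case is Case 2 ($\Pr[G\mid N_1]<c\le\Pr[Y^\ast\mid Y_1]$), where the candidate equilibrium $\sigma^\ast$ is ``adopt iff informed.'' By Step 1 of the proof of Proposition \ref{P: eqm small}, in \emph{any} equilibrium no uninformed agent adopts; hence any equilibrium strategy for agent $i$ is of the form ``adopt with probability $q_i\in[0,1]$ when informed, never when uninformed.'' So every equilibrium is pinned down by a vector $(q_1,\dots,q_I)$, and $\sigma^\ast$ corresponds to $(1,\dots,1)$. I would then compute agent $i$'s payoff as a function of the profile: conditional on $i$ being informed and adopting, his gross payoff is $1$ times the probability that the state is $g$ (automatic, since $Y_i\subset G$) and every \emph{other} agent both is informed and chooses to adopt, i.e.\ $\Pr[Y^\ast\mid Y_i]\cdot\prod_{j\ne i}q_j$; conditional on not adopting he gets $0$; and conditional on being uninformed he gets $0$. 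Since $\Pr[Y^\ast\mid Y_i]\prod_{j\ne i}q_j$ is weakly increasing in each $q_j$ and maximized at $q=(1,\dots,1)$, and since at $q=(1,\dots,1)$ this quantity is $\Pr[Y^\ast\mid Y_i]\ge c$ (so adopting is optimal and the payoff is the full $\Pr[Y_i]\big(\Pr[Y^\ast\mid Y_i]-c\big)\ge$ whatever $i$ gets under any other $q$, because under any other $q$ either $i$ does not adopt, earning $0$ on $Y_i$, or $i$ adopts against a smaller product), agent $i$'s expected payoff in $\sigma^\ast$ is the largest over all equilibria. Taking this over all $i$ yields Pareto dominance.

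The main obstacle I anticipate is purely bookkeeping in Case 2: one must be careful that ``other equilibria'' can be mixed, so the comparison is between $\sigma^\ast$ and an arbitrary equilibrium vector $(q_1,\dots,q_I)$ with some $q_j<1$, and one must verify the payoff formula and its monotonicity cleanly, using $\Pr[Y^\ast\mid Y_i]>\Pr[Y^\ast\mid Y_1]\ge c$ from Lemma \ref{L: Ystar given Yk}. No genuinely new idea beyond the monotone-externalities observation is needed; the argument is self-contained given Proposition \ref{P: eqm small} and the lemmas already cited.
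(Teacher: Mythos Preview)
Your proposal is correct and takes essentially the same approach as the paper: both exploit the positive-externality structure of the adoption game together with the fact that the candidate strategy is a best response at the candidate profile. The paper phrases this abstractly in each case as the two-step chain $u_i(\alpha_i,\beta_i,\boldsymbol{\alpha}_{-i},\boldsymbol{\beta}_{-i})\le u_i(\alpha_i,\beta_i,\mathbf{1}_{-i},\cdot)\le u_i(1,\cdot,\mathbf{1}_{-i},\cdot)$, whereas you parameterize Case~2 equilibria by $(q_1,\dots,q_I)$ and argue monotonicity in the $q_j$ directly---the underlying logic is identical.
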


\begin{proof}
To see that in each case the identified equilibrium Pareto dominates all
other equilibria, first let $\left( \alpha _{i},\beta _{i}\right) $ be the
randomized strategy of $i$ in which when informed, he adopts with
probability $\alpha _{i}$ and when uninformed, adopts with probability $%
\beta _{i}.$ Define $u_{i}\left( \alpha _{i},\beta _{ii},\boldsymbol{\alpha }%
_{-i},\boldsymbol{\beta }_{-i}\right) $ to be $i$'s expected payoff when he
plays strategy $\left( \alpha _{i},\beta _{i}\right) $ and the others play $%
\boldsymbol{\alpha }_{-i}=\left( \alpha _{j}\right) _{j\neq i}$ and $%
\boldsymbol{\beta }_{-i}=\left( \beta _{j}\right) _{j\neq i}.$

In Case 1, the fact that adopting exerts positive externalities implies that%
\[
u_{i}\left( \alpha _{i},\beta _{i},\boldsymbol{\alpha }_{-i},\boldsymbol{%
\beta }_{-i}\right) \leq u_{i}\left( \alpha _{i},\beta _{i},\boldsymbol{1}%
_{-i},\boldsymbol{1}_{-i}\right) 
\]%
where $\boldsymbol{\alpha }_{-i}=\boldsymbol{1}_{-i}$ means that all $j\neq
i $ play $\alpha _{j}=1$ and the similarly for $\boldsymbol{\beta }_{-i}.$
But since $\left( \alpha _{i},\beta _{i}\right) =\left( 1,1\right) $ is a
best-response to $\left( \boldsymbol{1}_{-i},\boldsymbol{1}_{-i}\right) ,$
we have that%
\[
u_{i}\left( \alpha _{i},\beta _{i},\boldsymbol{\alpha }_{-i},\boldsymbol{%
\beta }_{-i}\right) \leq u_{i}\left( 1,1,\boldsymbol{1}_{-i},\boldsymbol{1}%
_{-i}\right) 
\]

In Case 2, suppose that there is an equilibrium $\left( \alpha _{j},\beta
_{j}\right) $ for $j\in \mathcal{I}.$ Since adopting is iteratively
dominated when uninformed, it must be that in any equilibrium $\beta _{j}=0$
for all $j.$ Now again since adopting exerts positive externalities,%
\[
u_{i}\left( \alpha _{i},0,\boldsymbol{\alpha }_{-i},\boldsymbol{0}%
_{-i}\right) \leq u_{i}\left( \alpha _{i},0,\boldsymbol{1}_{-i},\boldsymbol{0%
}_{-i}\right) 
\]%
where $\boldsymbol{\beta }_{-i}=\mathbf{0}_{-i}$ means that all $j\neq i$
play $\beta _{j}=0.$ Since $\left( \alpha _{i},\beta _{i}\right) =\left(
1,0\right) $ is a best-response to $\left( \boldsymbol{1}_{-i},\boldsymbol{0}%
_{-i}\right) ,$%
\[
u_{i}\left( \alpha _{i},0,\boldsymbol{\alpha }_{-i},\boldsymbol{0}%
_{-i}\right) \leq u_{i}\left( 1,0,\boldsymbol{1}_{-i},\boldsymbol{0}%
_{-i}\right) 
\]

In Case 3 the equilibrium is unique.\bigskip
\end{proof}

In each case of Proposition \ref{P: small}, the equilibrium that maximizes
the probability that everyone adopts is symmetric---all agents play the same
strategy. But not all equilibria are symmetric. Consider a situation in
which $1$ and $2$ are connected and $3$ is isolated (a trivial tree).
Suppose $1$ and $3$ are the seeds for each of the two trees. Then if $\rho >%
\frac{1}{2}$ and $\varepsilon $ is small enough, for an open set of $c$'s
there is an asymmetric equilibrium in which $1$ and $2$ adopt if and only if
informed whereas $3$ always adopts. Of course, the corollary above implies
that this asymmetric equilibrium is Pareto dominated by one in which
everyone always adopts.

In the technology adoption game, the players exert positive externalities on
each other. The reader may then wonder if the irrelevance result relies on
this feature. This is not the case. In the game below, the irrelevance
result holds even though there are \emph{negative} externalities.

\subsection{Negative externalities}

A group of $I$ agents try to overthrow a repressive regime by protesting. If
all $I$ agents protest and the state is $g$, then the regime falls.
Otherwise, it survives. But protests can turn destructive---by damaging
public buildings, burning buses etc.. The greater the number of protesters,
the greater the damage. Let $a_{i}=1$ denote that $i$ protests and $a_{i}=0$
that he does not.

Suppose that in state $g$, the payoff functions are as follows:%
\begin{eqnarray*}
u_{i}^{g}\left( 1,\boldsymbol{a}_{-i}\right) &=&\left\{ 
\begin{array}{cc}
R-c-Id & \text{if }\sum_{j}a_{j}=I \\ 
-c-\left( \sum_{j}a_{j}\right) d & \text{if }\sum_{j}a_{j}<I%
\end{array}%
\right. \\
u_{i}^{g}\left( 0,\boldsymbol{a}_{-i}\right) &=&-\left(
\sum\nolimits_{j\neq i}a_{j}\right) d
\end{eqnarray*}%
In state $b$, they are%
\begin{eqnarray*}
u_{i}^{b}\left( 1,\boldsymbol{a}_{-i}\right) &=&-c-\left(
\sum\nolimits_{j}a_{j}\right) d \\
u_{i}^{b}\left( 0,\boldsymbol{a}_{-i}\right) &=&-\left(
\sum\nolimits_{j\neq i}a_{j}\right) d
\end{eqnarray*}

Here each player incurs a personal cost $c>0$ of protesting (either an
opportunity cost or one resulting from the risk of arrest or injury). Each
additional protester also causes an additional $d>0$ damage to public
property and this damage reduces the payoff to everybody. If the regime is
overthrown, then each player gets a reward $R>c+d$, the private plus public
marginal cost of protesting. This last condition ensures that in state $g,$
it is an equilibrium for everyone to protest.

In the protest game, players exert a \emph{negative} externality on each
other because of the destruction of public property.

Then it can be verified that the counterpart of Propositions \ref{P: eqm
small} holds for the protest game as well. Simply replace "adopting" with
"protesting" and the parameter $c$ with the parameter $\left( c+d\right) /R.$

It should be noted, however, that Corollary \ref{C: Pareto} is special to
the technology adoption game that exhibits positive externalities and does
not extend to the destructive protest game with negative externalities.

\subsection{Potential games}

What is the general class of games for which the irrelevance result holds?

Consider an $I$-player \emph{symmetric} game in which each $i$ chooses an
action $a_{i}\in A=\left\{ 0,1\right\} $. Let $\boldsymbol{a}=\left(
a_{i}\right) _{i\in \mathcal{I}}$ denote the vector of actions of all the
players and let $\boldsymbol{a}_{-i}=\left( a_{j}\right) _{j\neq i}$ denote
the vector of actions of all the players except $i.$

There are two possible states of nature $\theta \in \left\{ g,b\right\} $
and the payoff functions in state $\theta $, $u_{i}^{\theta
}:A^{I}\rightarrow R$ are such that $u_{i}^{\theta }\left( a_{i},\boldsymbol{%
a}_{-i}\right) $ depends only on $i$'s own action $a_{i}$ and on the number
of other players who play $a_{j}=1,$ $\sum_{j\neq i}a_{j}.$

As defined by Monderer and Shapley (1996), a game with payoffs $%
u_{i}^{\theta }:A^{I}\rightarrow R$ is a \emph{potential} game if there is a 
\emph{potential function} $v^{\theta }:A^{I}\rightarrow R$ such that for all 
$i,$ $a_{i}$, $a_{i}^{\prime }$ and $a_{-i}$,%
\[
u_{i}^{\theta }\left( a_{i},\boldsymbol{a}_{-i}\right) -u_{i}^{\theta
}\left( a_{i}^{\prime },\boldsymbol{a}_{-i}\right) =v^{\theta }\left( a_{i},%
\boldsymbol{a}_{-i}\right) -v^{\theta }\left( a_{i}^{\prime },\boldsymbol{a}%
_{-i}\right) 
\]%
In other words, for each $\theta $, the game is best-response equivalent to
a game with common interests.

It is easily verified that \emph{every} binary-action symmetric game is a
potential game.

Consider games with potentials of the form%
\begin{eqnarray}
v^{g}\left( \boldsymbol{a}\right) &=&\left\{ 
\begin{array}{cc}
w & \text{if }\sum_{j}a_{j}=I \\ 
-\left( \sum_{j}a_{j}\right) \gamma & \text{if }\sum_{j}a_{j}<I%
\end{array}%
\right.  \nonumber \\
v^{b}\left( \boldsymbol{a}\right) &=&-\left( \sum\nolimits_{j}a_{j}\right)
\gamma  \label{potential}
\end{eqnarray}%
where $w$ and $\gamma >0$ are parameters that satisfy $w>-\left( I-1\right)
\gamma $. Let $\mathcal{P}$ denote the class of potential games of the form
in (\ref{potential}).

The requirement that $w>-\left( I-1\right) \gamma $ guarantees that in the
game with common payoffs $v^{g},$ it is a strict Nash equilibrium for
everyone to choose $a_{i}=1.$

It is easy to verify that the technology adoption game from the previous
subsection is a potential game from the class $\mathcal{P}$ where $\gamma =c$
and $w=1-I\gamma ,$ so that the last requirement that $w>-\left( I-1\right)
\gamma $ reduces to $1>c.$

Much along the lines of Proposition \ref{P: eqm small}, it can be shown that
if $\varepsilon <\overline{\varepsilon },$ then for \emph{any} potential
game in the class $\mathcal{P}$ the irrelevance result applies.

\begin{proposition}
Let agent $1$ be a seed of some tree. If $0<\varepsilon <\overline{%
\varepsilon },$ then the highest equilibrium probability that everyone plays 
$a_{i}=1$ is $\mathbb{P}_{\mathcal{T}}\left[ C_{\mathcal{T}}^{p}\left(
G\right) \right] $ where $p=\frac{\gamma }{w+\gamma I}.$ Precisely, for any $%
\mathcal{T}=\left( F,s\right) $ 
\[
\max_{\sigma \in \mathcal{E}\left( \mathcal{T}\right) }\mathbb{P}_{\mathcal{T%
}}\left[ \boldsymbol{a}=\mathbf{1\mid }\sigma \right] =\left\{ 
\begin{tabular}{cl}
$1$ & if $\frac{\gamma }{w+\gamma I}\leq \Pr \left[ G\mid N_{1}\right] $ \\ 
$\Pr \left[ Y^{\ast }\right] $ & if $\Pr \left[ G\mid N_{1}\right] <\frac{%
\gamma }{w+\gamma I}\leq \Pr \left[ Y^{\ast }\mid Y_{1}\right] $ \\ 
$0$ & if $\frac{\gamma }{w+\gamma I}>\Pr \left[ Y^{\ast }\mid Y_{1}\right] $%
\end{tabular}%
\right. 
\]
\end{proposition}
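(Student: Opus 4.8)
The plan is to mimic the proof of Proposition \ref{P: eqm small} almost verbatim, with the adoption-game payoffs replaced by the payoffs of a generic potential game in the class $\mathcal{P}$. The key observation is that because the game is a potential game, its Nash equilibria coincide with the Nash equilibria of the common-interest game whose payoff to every player is the potential $v^\theta$. So it suffices to analyze that common-interest game. In the good state $g$, a player who is uninformed must decide whether switching from $a_i=0$ to $a_i=1$ raises the potential; the marginal change in $v^g$ from one more player choosing $1$ is $w+I\gamma$ when it completes the full profile and $-\gamma$ otherwise. Comparing against the bad-state potential change $-\gamma$, a short computation (exactly as in the Introduction's example) shows that the relevant cutoff probability is $p=\frac{\gamma}{w+\gamma I}$: an informed player is willing to choose $1$ (when all others do so if informed) iff the conditional probability that all others are informed is at least $p$, and an uninformed player is willing to choose $1$ (when everyone else does) iff the conditional probability of $G$ is at least $p$.

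With that reduction in hand, I would run the three cases. \textbf{Case 1} ($p\le \Pr[G\mid N_1]$): by Lemma \ref{L: G given Nk}, $\Pr[G\mid N_i]\ge\Pr[G\mid N_1]\ge p$ for every $i$, so in the common-interest game, given that everyone else plays $1$ regardless of information, each player---informed or not---weakly prefers $1$; hence there is an equilibrium in which everyone always plays $1$, giving probability $1$. \textbf{Case 2} ($\Pr[G\mid N_1]<p\le\Pr[Y^*\mid Y_1]$): Step 1 shows that playing $1$ when uninformed is iteratively dominated---walk down the unique path $1,2,\dots,k$ from the root, using $\Pr[G\mid N_1]<p$ to kill uninformed $1$, then Lemma \ref{L: Yk-1 given Nk} to give $\Pr[Y_1\mid N_2]<\Pr[G\mid N_1]<p$ and kill uninformed $2$, and so on, exactly as in Proposition \ref{P: eqm small}. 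Step 2: since $\Pr[Y^*\mid Y_i]>\Pr[Y^*\mid Y_1]\ge p$ for all $i$ (Lemma \ref{L: Ystar given Yk}), if all others play $1$ when informed, an informed player's best response is to play $1$; so there is an equilibrium in which everyone plays $1$ iff informed, with adoption probability $\Pr[Y^*]$, and by Step 1 no equilibrium can do better. \textbf{Case 3} ($p>\Pr[Y^*\mid Y_1]$): even if all others play $1$ when informed, an informed seed $1$ strictly prefers $0$ since $\Pr[Y^*\mid Y_1]<p$; so $1$ never plays $1$, and then (using the potential / the externality structure) no one ever plays $1$, making the no-play profile the unique equilibrium with probability $0$. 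In every case the answer is pinned down by $\Pr[G\mid N_1]$ and $\Pr[Y^*\mid Y_1]$, which by (\ref{Pr[G given N1]}) and (\ref{Pr[Ystar given Y1]}) do not depend on $\mathcal{T}$, and $\Pr[Y^*]=(1-\varepsilon)^I$ by Lemma \ref{L: Forest Prob}.

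The one genuinely new piece of work, and the main thing to get right, is the derivation of the cutoff $p=\frac{\gamma}{w+\gamma I}$ and, relatedly, the verification that in Case 3 the unravelling from ``seed $1$ never plays $1$'' to ``nobody ever plays $1$'' really does go through for the whole class $\mathcal{P}$ rather than just the adoption game. For the cutoff: an informed player $i$ facing all others playing $1$-iff-informed gets, in the common-interest game, expected potential $\Pr[Y^*\mid Y_i]\,w - (1-\Pr[Y^*\mid Y_i])\,(\text{something}\le (I-1)\gamma)\cdot$ adjustments from choosing $1$ versus the baseline of choosing $0$; the clean way is to note that $i$'s incremental contribution to the potential from switching $0\to 1$ equals $w+I\gamma$ on the event $Y^*$ (where all $I-1$ others are present) and $-\gamma$ off it, so the expected increment is $\Pr[Y^*\mid Y_i](w+I\gamma)-\gamma$, which is $\ge 0$ iff $\Pr[Y^*\mid Y_i]\ge \frac{\gamma}{w+\gamma I}$. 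The identical bookkeeping with $Y^*$ replaced by $G$ and conditioning on $N_i$ gives the uninformed player's condition, and the analogue of Lemma \ref{L: Yk-1 given Nk} / the path argument is unchanged because it only used probabilities, not payoffs. I would also remark, as the paper does for the adoption game, that this proposition can alternatively be deduced from Proposition \ref{P: small} together with Kajii--Morris (1997) and Oyama--Takahashi (2020), but the self-contained argument above is preferable for transparency.
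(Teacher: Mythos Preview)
Your proposal is correct and is exactly the approach the paper intends: the paper omits a proof, saying only that it ``mimics the proof of Proposition \ref{P: eqm small},'' and you have carried out precisely that mimicry, with the added (and helpful) observation that passage to the common-interest game with payoffs $v^\theta$ is licensed by the potential-game property. One small arithmetic slip: the incremental change in $v^g$ from $0\to 1$ on the event that all $I-1$ others play $1$ is $w-(-(I-1)\gamma)=w+(I-1)\gamma$, not $w+I\gamma$; after averaging against the $-\gamma$ increment off that event you still get $\Pr[Y^{\ast}\mid Y_i](w+I\gamma)-\gamma$, so your cutoff $p=\gamma/(w+\gamma I)$ and the rest of the argument are unaffected.
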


We omit a proof of this result as it mimics the proof of Proposition \ref{P:
eqm small}.

\section{Other networks and seedings\label{S: Other}}

In this section we explore some limits to our results. Specifically, we show
that the irrelevance result does not hold once the networks have cycles.
Also, it is sensitive to the assumption that each tree has a single
seed---multiple or random seedings can make the structure relevant.

In this section we state all of our findings using the technology adoption
game. These can all be restated in terms of common $p$-beliefs as well.

\subsection{Cycles\label{S: Nontree}}

Our result that the set $C^{p}\left( G\right) $ of states in which $G$ is
common $p$-believed rests crucially on the assumption the underlying network
is acyclic---a forest. Once there are cycles, the conclusion of our main
result does not hold.

Consider, for example, a situation with $4$ agents in the cyclic network
depicted in Figure \ref{F: Cycle}.%
%
%
%
%
%
%
%
%
%
%
%
%
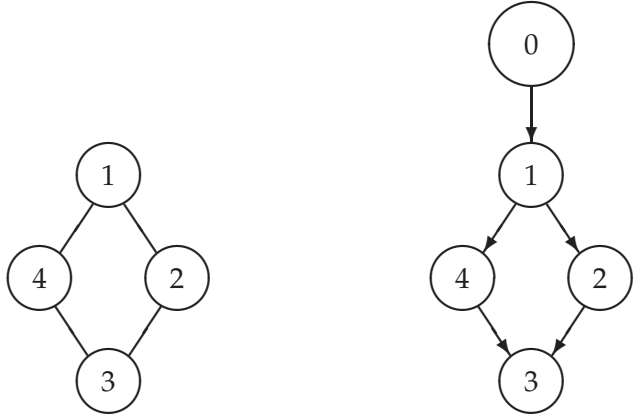
\begin{figure}[t]
\unitlength 2pt
\begin{picture}(160,90)(-12,-5)
\linethickness{0.5pt}
\small
\thicklines

\put(57,50){\circle {12}}
\put(57,50){\makebox(0,0)[cc]{$1$}}
\put(54,44.5){\line(-2,-3){6}}
\put(60,44.5){\line(2,-3){6}}

\put(44,30.5){\circle {12}}
\put(44,30.5){\makebox(0,0)[cc]{$4$}}
\put(47,25){\line(2,-3){6}}

\put(70,30.5){\circle {12}}
\put(70,30.5){\makebox(0,0)[cc]{$2$}}
\put(67,25){\line(-2,-3){6}}

\put(57,11){\circle {12}}
\put(57,11){\makebox(0,0)[cc]{$3$}}

\put(137,74.8){\circle {16}}
\put(137,74.8){\makebox(0,0)[cc]{$0$}}
\put(137,66.8){\vector(0,-1){10.8}}

\put(137,50){\circle {12}}
\put(137,50){\makebox(0,0)[cc]{$1$}}
\put(134,44.5){\vector(-2,-3){6}}
\put(140,44.5){\vector(2,-3){6}}

\put(124,30.5){\circle {12}}
\put(124,30.5){\makebox(0,0)[cc]{$4$}}
\put(127,25){\vector(2,-3){6}}

\put(150,30.5){\circle {12}}
\put(150,30.5){\makebox(0,0)[cc]{$2$}}
\put(147,25){\vector(-2,-3){6}}

\put(137,11){\circle {12}}
\put(137,11){\makebox(0,0)[cc]{$3$}}

\end{picture}
\caption{Cycle Network}
\label{F: Cycle}
\end{figure}
Suppose that the planner seeds the network by sending a message to $1.$ The
resulting information network, depicted on the right-hand side of the
figure, is now not a directed tree. Here, $3$ can receive messages from two
sources---either from $2$ or from $4.$ And of course, what $3$ believes
about whether agents $2$ and $4$ are informed depends on whom she hears
from. If $3$ receives a message only from $2$, then she knows $1$ and $2$
are informed but is unsure about $4$. If she hears from both $2$ and $4$,
then she knows that everyone else is informed.

So let $Y_{3}^{2}$ denote the event that agent $3$ heard a message \emph{only%
} directly from $2$ and similarly, let $Y_{3}^{4}$ denote the event that
agent $3$ heard only from $4.$ Finally, let $Y_{3}^{2\wedge 4}$ denote the
event that she heard from both $2$ and $4.$ As before, let $N_{3}$ denote
the event that $3$ did not hear from either source. The events $Y_{3}^{2},$ $%
Y_{3}^{4},$ $Y_{3}^{2\wedge 4}$ and $N_{3}$ are mutually exclusive and
exhaustive.

Consider the adoption game with four players. We will compare equilibrium
outcomes in the cycle network with those in the line network with four
agents.

\begin{claim}
Consider the four-agent technology adoption game. When $\varepsilon $ is
small, for an open set of costs $c$, there is an equilibrium with a line
network seeded via $1$ in which the probability that everyone adopts is
greater than that from any equilibrium with the cycle network again seeded
via $1.$
\end{claim}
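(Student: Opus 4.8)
The idea is to pin down an explicit open interval of costs on which the comparison is transparent: for the line we invoke Proposition~\ref{P: eqm small} directly, and for the cycle we analyze the (non-tree) information structure by hand. Fix $I=4$ and take $\varepsilon$ small. By Proposition~\ref{P: eqm small}, whenever $\Pr[G\mid N_1]<c\le\Pr[Y^\ast\mid Y_1]=(1-\varepsilon)^{3}$ (using (\ref{Pr[Ystar given Y1]})), the line seeded via $1$ has an equilibrium in which everyone adopts iff informed, and the probability that everyone adopts equals $\Pr[Y^\ast]$. So it suffices to exhibit an open set of such $c$ on which \emph{every} equilibrium of the cycle game seeded via $1$ has probability of everyone adopting strictly below $\Pr[Y^\ast]$.

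The argument for the cycle rests on two observations. First, when $c>\Pr[G\mid N_1]$, no uninformed agent adopts in any equilibrium: for agent~$1$ this is immediate from (\ref{Pr[G given N1]}); for agents~$2$ and~$4$ it follows from the same (iterated) dominance step as in Step~1 of the proof of Proposition~\ref{P: eqm small}, since $1$ is the immediate predecessor of each; and for agent~$3$ one checks directly that the probability an uninformed~$3$ attaches to $Y_1\cap Y_2\cap Y_4$ --- the best she could hope for, as $1,2,4$ adopt only when informed --- is of order $\varepsilon^{2}$, hence below $c$ once $\varepsilon$ is small. Second, the decisive belief: hearing only from $2$ tells $3$ that $1$ and $2$ are informed but leaves real doubt about $4$, and a short Bayes computation counting transmission successes and failures on the $1\to 4\to 3$ branch gives
\[
\Pr[Y_4\mid Y_3^{2}]=\frac{1-\varepsilon}{2-\varepsilon},
\]
and symmetrically $\Pr[Y_2\mid Y_3^{4}]=\frac{1-\varepsilon}{2-\varepsilon}$.

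Now suppose $c>\frac{1-\varepsilon}{2-\varepsilon}$. In event $Y_3^{2}$ agent~$3$'s expected gross payoff from adopting is at most $\Pr[Y_4\mid Y_3^{2}]=\frac{1-\varepsilon}{2-\varepsilon}<c$ (using that $4$ adopts only when informed), so not adopting strictly dominates adopting for $3$ in $Y_3^{2}$, and likewise in $Y_3^{4}$. Combined with the first observation, agent~$3$ adopts in equilibrium only in event $Y_3^{2\wedge 4}$; hence $\{\text{everyone adopts}\}\subseteq Y_3^{2\wedge 4}$, so the probability that everyone adopts is at most $\Pr[Y_3^{2\wedge 4}]$. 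Counting transmission successes along $0\to1$, $1\to2\to3$, and $1\to4\to3$ gives $\Pr[Y_3^{2\wedge 4}]=(1-\varepsilon)\,\Pr[Y^\ast]<\Pr[Y^\ast]$. Finally, the interval $\bigl(\max\{\Pr[G\mid N_1],\frac{1-\varepsilon}{2-\varepsilon}\},(1-\varepsilon)^{3}\bigr)$ is a nonempty open subinterval of $(\Pr[G\mid N_1],(1-\varepsilon)^{3}]$ for $\varepsilon$ small (it tends to $(\frac12,1)$ as $\varepsilon\to0$, and $\Pr[G\mid N_1]<(1-\varepsilon)^{3}$ is exactly $\varepsilon<\overline{\varepsilon}$), so Proposition~\ref{P: eqm small} applies on it, which establishes the claim.

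The one genuinely non-mechanical point, and where I expect to spend the most care, is ruling out adoption by an uninformed agent~$3$: unlike in the line, agent~$3$ is not a descendant of a single predecessor in a directed tree, so this cannot be read off the lemmas behind Proposition~\ref{P: eqm small} and must be verified by a direct posterior computation. The remaining inputs --- the belief $\Pr[Y_4\mid Y_3^{2}]=\frac{1-\varepsilon}{2-\varepsilon}$, and the probabilities $\Pr[Y_3^{2\wedge 4}]$ and $\Pr[Y^\ast]$ --- are routine counts of independent transmission events.
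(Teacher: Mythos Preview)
Your proposal is correct and follows essentially the same approach as the paper: both arguments show that for $c>\Pr[G\mid N_1]$ uninformed agents do not adopt, that $\Pr[Y_4\mid Y_3^{2}]=\frac{1-\varepsilon}{2-\varepsilon}$ forces $Y_3^{2}$ and $Y_3^{4}$ not to adopt when $c$ exceeds this threshold, and hence that the cycle's maximal adoption probability is $\rho(1-\varepsilon)^{5}$ versus $\rho(1-\varepsilon)^{4}$ for the line. The only minor difference is that the paper also constructs the cycle equilibrium attaining $(1-\varepsilon)^{5}$ (using the upper bound $c<(1-\varepsilon)^{4}$), whereas you --- correctly --- observe that only the upper bound on the cycle is needed for the comparison, which lets you work on the slightly larger interval with right endpoint $(1-\varepsilon)^{3}$; you are also a bit more explicit than the paper about why an uninformed agent~$3$ does not adopt.
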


The proof of the claim is in Appendix \ref{A: Nontree}.

\subsection{Multiple seeds\label{S: Mult}}

We have assumed that the planner sends information to a \emph{single} agent
in each tree of the forest network---that is, each tree has a single seed.
Intuition suggests that it is better to send information to many agents at
once---that is, to create multiple seeds. This will surely help reduce
first-order uncertainty about the state. Here we show that the effect of
multiple seeds on reducing higher-order uncertainty is ambiguous and in some
circumstances a single seed is "better" than multiple seeds. In the
technology adoption game from the introduction, there is a range of costs $c$
for which agents' welfare is higher with a single seed than with multiple
seeds.%
%
%
%
%
%
%
%
%
%
%
%
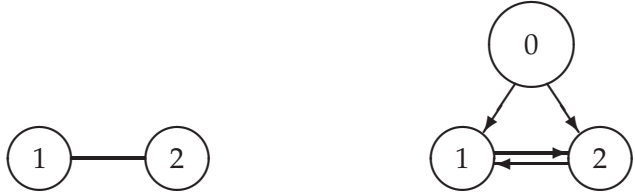
\begin{figure}[t]
\unitlength 2pt
\begin{picture}(160,60)(-12,-5)
\linethickness{0.5pt}
\small
\thicklines

\put(44,10.5){\circle {12}}
\put(44,10.5){\makebox(0,0)[cc]{$1$}}
\put(50,10.5){\line(1,0){14}}

\put(70,10.5){\circle {12}}
\put(70,10.5){\makebox(0,0)[cc]{$2$}}

\put(137,32){\circle {16}}
\put(137,32){\makebox(0,0)[cc]{$0$}}
\put(134,24.5){\vector(-2,-3){6}}
\put(140,24.5){\vector(2,-3){6}}

\put(124,10.5){\circle {12}}
\put(124,10.5){\makebox(0,0)[cc]{$1$}}
\put(130,11.5){\vector(1,0){14}}

\put(150,10.5){\circle {12}}
\put(150,10.5){\makebox(0,0)[cc]{$2$}}
\put(144,9.5){\vector(-1,0){14}}

\end{picture}
\caption{Mutiple Seeds}
\label{F: Multiseed}
\end{figure}%

Consider a simple network with two connected agents (as in left-hand panel
of Figure \ref{F: Multiseed}). Now suppose the planner seeds \emph{both} $1$
and $2.$ This results in the information network depicted in the right-hand
panel. Here if $1$ gets a message, she passes it along to $2$ and vice
versa. Thus each agent has two sources of information---directly from the
planner or indirectly from the other agent. And of course, what $1$ believes
about whether $2$ is also informed depends on the channel by which she
received a message. If $1$ received a message from $2$, then she knows that $%
2$ is also informed. But if $1$ received a message directly from $0$, and
not from $2,$ then she is unsure about whether $2$ is informed.

Suppose that the agents play the two-player version of the technology
adoption game from Section \ref{S: Game}.

For the connected two-player network in Figure \ref{F: Multiseed}, we have

\begin{claim}
\label{c: multgame}Consider the two-agent technology adoption game. When $%
\varepsilon $ is small, for an open set of costs $c$, there is an
equilibrium with a single seed in which the probability that everyone adopts
is greater than that from any equilibrium with two seeds.
\end{claim}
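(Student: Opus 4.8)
The plan is to pin down an open interval of costs on which a single seed yields an equilibrium with adoption probability $1$, while every equilibrium under two seeds has adoption probability strictly below $1$. Write $\Pr_1$ (resp.\ $\Pr_2$) for probabilities in the single-seed (resp.\ two-seed) model. The engine is the posterior of an \emph{uninformed} agent. In the single-seed model an uninformed agent $1$ (the seed) has posterior $\Pr_1[G\mid N_1]=\frac{\rho\varepsilon}{1-\rho(1-\varepsilon)}$, which is of order $\varepsilon$; this is exactly~(\ref{Pr[G given N1]}). In the two-seed model agent $1$ is uninformed only when the planner's direct message to her is lost \emph{and} agent $2$'s relay fails---and $2$ relays to $1$ only if $2$ was herself informed directly---so a one-line computation gives $\Pr_2[G\mid N_1]=\frac{\rho\varepsilon^2(2-\varepsilon)}{\rho\varepsilon^2(2-\varepsilon)+1-\rho}$, which is of order $\varepsilon^2$. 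Hence for all sufficiently small $\varepsilon$ the interval $J_\varepsilon:=\bigl(\Pr_2[G\mid N_1],\,\Pr_1[G\mid N_1]\bigr)$ is open and nonempty, and this will be the asserted set of costs.

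Now fix $c\in J_\varepsilon$. On the single-seed side, $c\le\Pr_1[G\mid N_1]$, so Case~1 of Proposition~\ref{P: eqm small} gives an equilibrium in which every agent adopts whether or not she is informed; the probability that everyone adopts is then $1$. On the two-seed side, since $c>\Pr_2[G\mid N_1]$, an uninformed agent who adopts earns expected payoff at most $\Pr_2[G\mid N_1]-c<0$ (this bound is attained when the other agent always adopts), whereas not adopting earns $0$; so for an uninformed agent not adopting strictly dominates adopting. Hence in \emph{every} two-seed equilibrium no uninformed agent adopts, so the event ``everyone adopts'' is contained in the event $Y^\ast$ that both agents are informed, and therefore the two-seed adoption probability is at most $\Pr_2[Y^\ast]$. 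It remains to compute $\Pr_2[Y^\ast]$: conditioning on which of the planner's two direct messages arrive gives $\Pr_2[Y^\ast]=(1-\varepsilon)^2+2\varepsilon(1-\varepsilon)^2=(1-\varepsilon)^2(1+2\varepsilon)=1-\varepsilon^2(3-2\varepsilon)<1$ for every $\varepsilon\in(0,1)$. Thus every two-seed equilibrium has adoption probability at most $(1-\varepsilon)^2(1+2\varepsilon)<1$, strictly below the value $1$ attained by the single-seed equilibrium, which proves the claim for all $c\in J_\varepsilon$.

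I do not anticipate a real obstacle. The two points that need care are (i) getting the two-seed information flow right---in particular that an agent who receives the message only via the relay does \emph{not} forward it back, which is precisely what makes $\Pr_2[G\mid N_1]$ of order $\varepsilon^2$ and $\Pr_2[Y^\ast]=(1-\varepsilon)^2(1+2\varepsilon)$---and (ii) verifying $\Pr_2[G\mid N_1]<\Pr_1[G\mid N_1]$ for $\varepsilon$ small, which is immediate from the orders of magnitude ($\varepsilon^2$ versus $\varepsilon$). Conceptually, a second seed makes ``being uninformed'' much rarer in state $g$---order $\varepsilon^2$ rather than $\varepsilon$---hence far stronger evidence that the state is $b$; so the uninformed agents who would have piled in under a single seed now abstain, capping the two-seed adoption probability at $\Pr_2[Y^\ast]<1$. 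The argument uses nothing beyond Proposition~\ref{P: eqm small} and two elementary probability computations.
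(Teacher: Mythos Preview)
Your argument is correct and follows essentially the paper's own route: pick $c$ in the open interval $\bigl(\mathbb{P}_M[G\mid N_1],\,\mathbb{P}[G\mid N_1]\bigr)$, invoke Case~1 of Proposition~\ref{P: eqm small} to obtain adoption probability $1$ under a single seed, and use strict dominance for uninformed agents to cap every two-seed equilibrium at $\mathbb{P}_M[Y^\ast]<1$. One cosmetic slip: the quantity you compute and call $\Pr_2[Y^\ast]$ is really $\Pr_2[Y^\ast\mid G]$---the unconditional value is $\rho(1-\varepsilon)^2(1+2\varepsilon)$---but this is still strictly below $1$, so the conclusion is unaffected.
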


The proof of this claim is in Appendix \ref{A: Mult}.

\subsection{Random seeds}

We have assumed that the planner chooses a single seed in each tree and the
identity of the seed is known to all the agents. Does the irrelevance result
still hold if the planner choose a single seed, but at \emph{random}? The
answer is no as the following example demonstrates.

Suppose there are only two agents, $1$ and $2$ and the network is connected.
With probability $\frac{1}{2}$ the planner chooses $1$ as the seed and with
probability $\frac{1}{2}$ chooses $2$ as the seed.

For the connected two-player network, we have

\begin{claim}
Consider the two-agent technology adoption game. When $\varepsilon $ is
small, for an open set of costs $c$, there is an equilibrium with a random
seeding in which the probability that everyone adopts is greater than that
from any equilibrium with only one seed.
\end{claim}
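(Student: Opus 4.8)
The plan is to show that when the seed is chosen at random and \emph{its identity---equivalently, the channel through which a message arrives---is not observed by the agents}, each agent's second-order information strictly improves relative to any deterministic seed, and that this extra slack in incentives lets an ``adopt if and only if informed'' equilibrium survive for a range of costs at which the deterministically seeded network supports no adoption at all.

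First I would record the relevant probabilities. Let $\mathcal{T}_{1}$ and $\mathcal{T}_{2}$ be the two directed trees obtained by seeding the edge $\{1,2\}$ at $1$ and at $2$ respectively, and let $\mathbb{Q}=\tfrac{1}{2}\mathbb{P}_{\mathcal{T}_{1}}+\tfrac{1}{2}\mathbb{P}_{\mathcal{T}_{2}}$ be the distribution over $\Omega$ induced by the random seeding (with $I=2$). Elementary conditioning gives
\[
\mathbb{Q}[Y_{1}]=\tfrac{1}{2}\rho(1-\varepsilon)+\tfrac{1}{2}\rho(1-\varepsilon)^{2}=\tfrac{1}{2}\rho(1-\varepsilon)(2-\varepsilon),\qquad \mathbb{Q}[Y^{\ast}]=\rho(1-\varepsilon)^{2},
\]
and hence
\[
\mathbb{Q}\!\left[Y^{\ast}\mid Y_{1}\right]=\frac{2(1-\varepsilon)}{2-\varepsilon},\qquad \mathbb{Q}\!\left[G\mid N_{1}\right]=\frac{\rho\varepsilon(3-\varepsilon)/2}{1-\rho(1-\varepsilon)(2-\varepsilon)/2}.
\]
The first identity is the crux: $\frac{2(1-\varepsilon)}{2-\varepsilon}-(1-\varepsilon)=(1-\varepsilon)\frac{\varepsilon}{2-\varepsilon}>0$ for every $\varepsilon\in(0,1)$, while $\frac{2(1-\varepsilon)}{2-\varepsilon}<1$; and the second tends to $0$ as $\varepsilon\to0$, so it lies below $1-\varepsilon$ for small $\varepsilon$. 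By symmetry the same holds with $1$ and $2$ interchanged.

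Next I would fix $\varepsilon$ small---in particular $\varepsilon<\overline{\varepsilon}$, which holds near $\varepsilon=0$ since there $\Pr[G\mid N_{1}]<1-\varepsilon$---and fix any cost $c$ in the nonempty open interval $\bigl(1-\varepsilon,\ \frac{2(1-\varepsilon)}{2-\varepsilon}\bigr)\subset(0,1)$. Under the random seeding the symmetric profile in which each agent adopts exactly when informed is a strict equilibrium: an informed agent is certain the state is $g$, so adopting yields gross expected payoff $\mathbb{Q}[Y^{\ast}\mid Y_{1}]>c$ and thus strictly beats the payoff $0$ from not adopting, while an uninformed agent's payoff from adopting is at most $\mathbb{Q}[G\mid N_{1}]-c<0$. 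Along this equilibrium everyone adopts with probability $\mathbb{Q}[Y^{\ast}]=\rho(1-\varepsilon)^{2}>0$. By contrast, for a single known seed of either tree we have $\Pr[Y^{\ast}\mid Y_{1}]=1-\varepsilon<c$, so by Case 3 of Proposition \ref{P: eqm small} the unique equilibrium has no agent ever adopting and the probability that everyone adopts is $0$. Since $\rho(1-\varepsilon)^{2}>0$, the random seeding strictly improves on every single-seed equilibrium throughout this interval of costs, which is the claim.

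The step I expect to carry the real content is not any of these calculations---they are routine---but the modeling premise that agents cannot tell \emph{how} a message reached them. If they could, the informed agent who received the message directly from the planner would assign probability only $1-\varepsilon$ to the other being informed, her incentive constraint would again bind at the weaker threshold $1-\varepsilon$, and the advantage of randomization would disappear; so I would state and use this observational assumption explicitly, paralleling the discussion of multiple seeds, as it is exactly what separates random seeding from the irrelevance benchmark.
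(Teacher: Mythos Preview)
Your argument is correct but takes a different route from the paper's. The paper works at the \emph{bottom} end of the cost range: random seeding raises an uninformed agent's posterior, $\widetilde{\mathbb{P}}[G\mid N_{1}]>\mathbb{P}^{1}[G\mid N_{1}]$, so for $c\in\bigl(\mathbb{P}^{1}[G\mid N_{1}],\,\widetilde{\mathbb{P}}[G\mid N_{1}]\bigr)$ the ``always adopt'' equilibrium survives under random seeding (probability of universal adoption equal to $1$) while under a single seed the best equilibrium is ``adopt iff informed'' (probability $\rho(1-\varepsilon)^{2}$). You instead work at the \emph{top} end: under your observability assumption, random seeding raises an informed agent's second-order belief to $\mathbb{Q}[Y^{\ast}\mid Y_{1}]=\tfrac{2(1-\varepsilon)}{2-\varepsilon}>1-\varepsilon$, so for $c\in\bigl(1-\varepsilon,\,\tfrac{2(1-\varepsilon)}{2-\varepsilon}\bigr)$ the ``adopt iff informed'' equilibrium survives under random seeding (probability $\rho(1-\varepsilon)^{2}$) while a single seed yields no adoption at all. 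Both intervals are nonempty for small $\varepsilon$, so both arguments establish the claim.

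The substantive difference is that the paper's mechanism does not need your channel-unobservability premise: the improvement lives in $\widetilde{\mathbb{P}}[G\mid N_{i}]$, and an uninformed agent has no channel to observe, so the paper's comparison goes through whether or not messages carry a return address. This means your closing assertion---that ``the advantage of randomization would disappear'' if agents could identify the source of a message---is too strong. What would disappear is the particular second-order slack you exploit; the first-order slack the paper uses would remain. Your route is cleaner in one respect (it compares $\rho(1-\varepsilon)^{2}$ against $0$, so the ranking is immediate), but it buys that cleanliness with an extra modeling assumption that the paper's argument does not require.
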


The proof of this claim is in Appendix \ref{A: Random}.

\section{Conclusion}

People are interconnected in many ways. The same person may be part of a
professional network, a family network or a leisure network and so have many
sources of information. In such situations, the overall network may not have
a tree-like structure and so our irrelevance result will not apply.
Nevertheless, the point that there is a trade-off between disseminating
information quickly and making the information commonly known can be applied
more generally. When the goal of the policymaker is to engineer coordinated
behavior, the latter is more important.

\appendix%

\section{Appendix: Agents' beliefs\label{A: Beliefs}}

This appendix derives agents' beliefs of different events used to prove the
main result.

\subsection{Beliefs along a path}

Here we derive three results that compare the beliefs of agents who lie
along the same path originating with the planner (agent $0$). So let $1$ be
a seed and let $K$ be a terminal node (a leaf). Suppose, after renaming,
that the unique path from $0$ to $K$ consists of agents $k=1,2,..,K$ such
that the direct predecessor of $k$ is $k-1.$

The first lemma simply says that the further an uninformed agent is from the
seed, the more optimistic he is that the fundamental $\theta =g.$ This is
because an uninformed agent further from the seed ascribes a higher
probability to the event that the fundamental is $g$ and the message got
lost somewhere along the way, than someone closer to the seed.

\begin{lemma}
\label{L: G given Nk}The sequence%
\[
\Pr \left[ G\mid N_{k}\right] 
\]%
is increasing in $k.$
\end{lemma}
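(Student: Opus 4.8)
The plan is to compute $\Pr[G \mid N_k]$ explicitly and show the resulting expression is increasing in $k$. Fix the path $0,1,2,\dots,K$ with $k-1$ the direct predecessor of $k$. The key observation is that, because the network is a tree and information flows in a single direction, the event $N_k$ (agent $k$ uninformed) decomposes according to \emph{where} the message first failed to arrive along the path $0 \to 1 \to \cdots \to k$: either the state is $b$ (no message sent at all), or the state is $g$ and the message reached some agent $j-1$ on the path (with $j \le k$, where $j=1$ means it never reached $1$) but the link from $j-1$ to $j$ failed. These events are disjoint.

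First I would write $\Pr[G \cap N_k] = \Pr[\theta = g, \text{message fails to reach } k] = \rho\bigl(1 - (1-\varepsilon)^k\bigr)$, since conditional on $\theta = g$ the message reaches $k$ precisely when all $k$ links on the path $0\to 1 \to \cdots \to k$ succeed, an event of probability $(1-\varepsilon)^k$. Then $\Pr[N_k] = \Pr[G \cap N_k] + \Pr[B] = \rho\bigl(1-(1-\varepsilon)^k\bigr) + (1-\rho)$, since in state $b$ agent $k$ is certainly uninformed. Therefore
\[
\Pr[G \mid N_k] = \frac{\rho\bigl(1-(1-\varepsilon)^k\bigr)}{1-\rho(1-\varepsilon)^k}.
\]
(As a sanity check, for $k=1$ this gives $\dfrac{\rho\varepsilon}{1-\rho(1-\varepsilon)}$, matching equation~(\ref{Pr[G given N1]}).)

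It then remains to show this is increasing in $k$, equivalently (writing $t = (1-\varepsilon)^k \in (0,1)$, which is \emph{decreasing} in $k$) that $f(t) = \dfrac{\rho(1-t)}{1-\rho t}$ is decreasing in $t$ on $(0,1)$. Differentiating, $f'(t) = \dfrac{-\rho(1-\rho t) - \rho(1-t)(-\rho)}{(1-\rho t)^2} = \dfrac{-\rho(1-\rho t) + \rho^2(1-t)}{(1-\rho t)^2} = \dfrac{-\rho(1-\rho)}{(1-\rho t)^2} < 0$ since $\rho \in (0,1)$. Hence $f$ is strictly decreasing in $t$, and since $t = (1-\varepsilon)^k$ strictly decreases in $k$, the composition $\Pr[G \mid N_k] = f\bigl((1-\varepsilon)^k\bigr)$ strictly increases in $k$, as claimed.

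I do not anticipate a serious obstacle here; the only thing requiring a little care is the decomposition of $N_k$ — specifically noting that, because the tree is acyclic with a single directed flow of information, agent $k$'s only possible source is through the unique path from the seed, so $N_k \cap G$ really is just "the message was lost at some link on that path," with probability $\rho(1-(1-\varepsilon)^k)$ computed cleanly via the complementary event. Everything after that is the elementary monotonicity computation above.
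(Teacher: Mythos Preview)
Your proof is correct and follows essentially the same approach as the paper: both derive the explicit formula $\Pr[G\mid N_k]=\dfrac{\rho(1-(1-\varepsilon)^k)}{1-\rho(1-\varepsilon)^k}$ and then observe it is increasing in $k$. The paper simply asserts the monotonicity ``follows immediately,'' whereas you spell it out via the substitution $t=(1-\varepsilon)^k$ and a derivative computation, which is a fine elaboration of the same idea.
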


\begin{proof}
Note that%
\begin{equation}
\Pr \left[ G\mid N_{k}\right] =\frac{\rho \left( 1-\left( 1-\varepsilon
\right) ^{k}\right) }{1-\rho \left( 1-\varepsilon \right) ^{k}}
\label{Pr G given Nk}
\end{equation}%
since $\Pr \left[ Y_{k}\right] =\rho \left( 1-\varepsilon \right) ^{k}$ and
so $\Pr \left[ N_{k}\right] =1-\rho \left( 1-\varepsilon \right) ^{k}.$

The result then follows immediately.\bigskip
\end{proof}

The second lemma says that the further an uninformed agent is from the seed,
the more pessimistic he is about the event that all his predecessors are
informed. The intuition is that the further the agent is along the path, the
greater the chance that the message was lost somewhere prior to reaching his
immediate predecessor.

In what follows, it will be convenient to write%
\begin{equation}
Y_{0}=G  \label{def Y0}
\end{equation}

\begin{lemma}
\label{L: Yk-1 given Nk}The sequence%
\[
\Pr \left[ Y_{k-1}\mid N_{k}\right] 
\]%
is decreasing in $k$ for $k\geq 1.$
\end{lemma}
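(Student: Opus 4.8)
The plan is to obtain a closed form for $\Pr\left[Y_{k-1}\mid N_{k}\right]$ and then show the resulting expression is monotone decreasing in $k$. First I would compute the numerator $\Pr\left[Y_{k-1}\cap N_{k}\right]$. Since the path from the planner is $0,1,2,\ldots,K$ with $k-1$ the direct predecessor of $k$, the event $Y_{k-1}$ already implies $\theta =g$ and that every link $0\to 1,1\to 2,\ldots,(k-2)\to(k-1)$ succeeded; given $Y_{k-1}$, the event $N_{k}$ is exactly the event that the single link $(k-1)\to k$ was lost, which has conditional probability $\varepsilon$ by the independence of transmission losses. Hence $\Pr\left[Y_{k-1}\cap N_{k}\right]=\rho\left(1-\varepsilon\right)^{k-1}\varepsilon$. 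For $k=1$ this reads $\Pr\left[G\cap N_{1}\right]=\rho\varepsilon$ using the convention $Y_{0}=G$ from (\ref{def Y0}), which is consistent with (\ref{Pr[G given N1]}), so the boundary case needs no separate treatment. Combining with $\Pr\left[N_{k}\right]=1-\rho\left(1-\varepsilon\right)^{k}$ (as in the proof of Lemma \ref{L: G given Nk}) gives
\[
\Pr\left[Y_{k-1}\mid N_{k}\right]=\frac{\rho\varepsilon\left(1-\varepsilon\right)^{k-1}}{1-\rho\left(1-\varepsilon\right)^{k}}.
\]

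Next I would establish monotonicity. Writing $q=1-\varepsilon\in\left(0,1\right)$, the expression becomes $\dfrac{\rho\varepsilon}{q}\cdot\dfrac{q^{k}}{1-\rho q^{k}}$. The map $x\mapsto \dfrac{x}{1-\rho x}$ is strictly increasing on the relevant range (its derivative is $\left(1-\rho x\right)^{-2}>0$, and $1-\rho q^{k}\geq 1-\rho>0$ since $\rho<1$), while $q^{k}$ is strictly decreasing in $k$ because $0<q<1$. The composition of an increasing function with a decreasing one is decreasing, so $\Pr\left[Y_{k-1}\mid N_{k}\right]$ is decreasing in $k$ for $k\geq 1$, which is the claim.

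There is essentially no hard step here: the only place requiring care is the identification of $\Pr\left[Y_{k-1}\cap N_{k}\right]$, where one must use that $Y_{k-1}\subseteq G$ and that, conditional on $k-1$ being informed, whether $k$ is informed depends only on the independent loss of the $(k-1)\to k$ link, and the handling of the $k=1$ endpoint via the convention $Y_{0}=G$. Everything after that is a one-line monotonicity argument.
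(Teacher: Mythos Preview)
Your proof is correct and follows essentially the same approach as the paper: derive the closed form $\Pr\left[Y_{k-1}\mid N_{k}\right]=\rho\varepsilon\left(1-\varepsilon\right)^{k-1}/\left(1-\rho\left(1-\varepsilon\right)^{k}\right)$ and then check monotonicity. The paper simply asserts that the inequality $\Pr\left[Y_{k-1}\mid N_{k}\right]>\Pr\left[Y_{k}\mid N_{k+1}\right]$ is ``easy to verify,'' whereas your substitution $q=1-\varepsilon$ and the composition argument make that step explicit, but there is no substantive difference.
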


\begin{proof}
Note that since $Y_{0}=G,$ for any $k\geq 1$,%
\[
\Pr \left[ Y_{k-1}\mid N_{k}\right] =\frac{\rho \left( 1-\varepsilon \right)
^{k-1}\varepsilon }{1-\rho \left( 1-\varepsilon \right) ^{k}} 
\]%
The numerator is the probability of the joint event $Y_{k-1}\cap N_{k}$
which occurs if only if $k-1$ receives the message (probability $\rho \left(
1-\varepsilon \right) ^{k-1}$) and $k$ does not (probability $\varepsilon $%
). The denominator is the probability of $N_{k}$ which is just $1-\Pr \left[
Y_{k}\right] $. Now $Y_{k}$ occurs if and only if $k$ gets the message
(probability $\rho \left( 1-\varepsilon \right) ^{k}$).

It is now easy to verify that%
\[
\Pr \left[ Y_{k-1}\mid N_{k}\right] >\Pr \left[ Y_{k}\mid N_{k+1}\right] 
\]%
\bigskip
\end{proof}

The third lemma is also rather intuitive. It says that informed agents who
are further along the path are increasingly optimistic that all agents,
whether or not they are on the path, are informed as well.

\begin{lemma}
\label{L: Ystar given Yk}The sequence%
\[
\Pr \left[ Y^{\ast }\mid Y_{k}\right] 
\]%
is increasing in $k.$
\end{lemma}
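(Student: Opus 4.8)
The plan is to reduce the statement to a triviality about the denominator. The key observation is that $Y^{\ast}\subseteq Y_{k}$ for every $k$: if every agent is informed then, in particular, agent $k$ is informed. Hence $Y^{\ast}\cap Y_{k}=Y^{\ast}$, and Bayes' rule gives
\[
\Pr\!\left[Y^{\ast}\mid Y_{k}\right]=\frac{\Pr\!\left[Y^{\ast}\cap Y_{k}\right]}{\Pr\!\left[Y_{k}\right]}=\frac{\Pr\!\left[Y^{\ast}\right]}{\Pr\!\left[Y_{k}\right]}.
\]
Since $\Pr[Y^{\ast}]$ does not depend on $k$, it suffices to show that $\Pr[Y_{k}]$ is decreasing in $k$ along the path.

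First I would observe that $Y_{k}\subseteq Y_{k-1}$: because $k-1$ is the unique immediate predecessor of $k$ on the path from the seed, a message can reach $k$ only by passing through $k-1$, so $k$ informed implies $k-1$ informed. (Using the convention $Y_{0}=G$ from (\ref{def Y0}), this also covers $k=1$, since $Y_{1}\subseteq G$.) Monotonicity of $\Pr[Y_{k}]$ is then immediate, and it yields $\Pr[Y^{\ast}\mid Y_{k}]\ge\Pr[Y^{\ast}\mid Y_{k-1}]$. For strictness I would note that the link from $k-1$ to $k$ fails with probability $\varepsilon>0$ independently of everything upstream, so $\Pr[Y_{k-1}\setminus Y_{k}]=\varepsilon\,\Pr[Y_{k-1}]>0$ while $\Pr[Y^{\ast}]>0$; equivalently, using $\Pr[Y_{k}]=\rho(1-\varepsilon)^{k}$ (computed in the proof of Lemma~\ref{L: G given Nk}) together with $\Pr[Y^{\ast}]=\rho(1-\varepsilon)^{I}$, one obtains the closed form $\Pr[Y^{\ast}\mid Y_{k}]=(1-\varepsilon)^{I-k}$, which is manifestly increasing in $k$ since $1-\varepsilon\in(0,1)$.

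There is no real obstacle here; the only point requiring a moment's care is that the argument must hold for a forest and not merely a single tree. This is automatic: conditioning on $Y_{k}$ conveys no information about links lying in the other trees, and the closed form $\Pr[Y^{\ast}\mid Y_{k}]=(1-\varepsilon)^{I-k}$ has $I$ equal to the total number of agents in the whole forest, consistent with (\ref{Pr[Ystar given Y1]}) at $k=1$.
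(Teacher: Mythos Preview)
Your proof is correct and follows essentially the same route as the paper: both use $Y^{\ast}\subseteq Y_{k}$ to write $\Pr[Y^{\ast}\mid Y_{k}]=\Pr[Y^{\ast}]/\Pr[Y_{k}]$ and then observe that $\Pr[Y_{k}]=(1-\varepsilon)\Pr[Y_{k-1}]$ along the path, yielding the ratio $\Pr[Y^{\ast}\mid Y_{k-1}]/\Pr[Y^{\ast}\mid Y_{k}]=1-\varepsilon$. Your added closed form $(1-\varepsilon)^{I-k}$ and the remark about forests are consistent with what the paper derives immediately after the lemma in (\ref{power I-k}).
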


\begin{proof}
Since for all $k,$%
\[
\Pr \left[ Y^{\ast }\mid Y_{k}\right] \times \Pr \left[ Y_{k}\right] =\Pr %
\left[ Y^{\ast }\right] 
\]%
we have%
\[
\frac{\Pr \left[ Y^{\ast }\mid Y_{k-1}\right] }{\Pr \left[ Y^{\ast }\mid
Y_{k}\right] }=\frac{\Pr \left[ Y_{k}\right] }{\Pr \left[ Y_{k-1}\right] } 
\]%
And since $Y_{k}\subset Y_{k-1}$ and $k-1$ is the unique direct predecessor
of $k,$%
\begin{eqnarray*}
\Pr \left[ Y_{k}\right] &=&\Pr \left[ Y_{k}\mid Y_{k-1}\right] \times \Pr %
\left[ Y_{k-1}\right] \\
&=&\left( 1-\varepsilon \right) \times \Pr \left[ Y_{k-1}\right]
\end{eqnarray*}%
and so%
\begin{equation}
\frac{\Pr \left[ Y^{\ast }\mid Y_{k-1}\right] }{\Pr \left[ Y^{\ast }\mid
Y_{k}\right] }=1-\varepsilon  \label{Pr ratio}
\end{equation}
\end{proof}

\subsection{Probability of $Y^{\ast }$}

Suppose the $I$ agents are completely disconnected so that each agent is a
"tree" with one node (as in Example $1$ case B). Now the only way the
information can get to all the agents is if every agent is a seed---that is,
the planner "broadcasts" the message. Since each message is lost
independently with probability $\varepsilon ,$ the probability that the
information reaches all the agents is simply $\left( 1-\varepsilon \right)
^{I}.$

In an arbitrary tree (or more generally, a forest), if a message from $i$ to 
$j$ is lost so that $j$ is uninformed, then this means that all agents in
the sub-tree with $j$ as the root are also uninformed. Thus, unlike in the
case of a broadcast, whether or not $i$ and $j$ are informed are correlated.
The next result shows that despite this, no matter what the structure of the
forest is, the probability that all agents are informed is the same as when
there is a broadcast.

\begin{lemma}
\label{L: Forest Prob}For any forest with $I$ nodes,%
\[
\Pr \left[ Y^{\ast }\mid G\right] =\left( 1-\varepsilon \right) ^{I} 
\]
\end{lemma}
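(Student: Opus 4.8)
The plan is to reduce the event $Y^{\ast}$ to a statement about the edges of the directed information tree $\mathcal{T}=\mathcal{T}(F,s)$, whose root is the planner $0$. First I would record the structural fact that $\mathcal{T}$ has exactly $I$ directed edges: each of the $R$ seeds receives one edge from $0$, and inside each tree $T^{r}$ the directed edges are exactly the edges of the undirected tree $T^{r}$, of which there are $\left\vert T^{r}\right\vert -1$; summing, $\sum_{r}\bigl(1+(\left\vert T^{r}\right\vert -1)\bigr)=\sum_{r}\left\vert T^{r}\right\vert =I$. Equivalently, every agent-node has exactly one incoming edge (from its unique immediate predecessor), so $\mathcal{T}$ has $I$ edges in all.

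Next I would set up the probability model so that the relevant independence is transparent. Conditional on $G$, attach to each edge $e$ of $\mathcal{T}$ an independent Bernoulli random variable $Z_{e}$ with $\Pr[Z_{e}=1]=1-\varepsilon$ (the edge ``transmits''). By the description of message passing, an agent $i$ is informed precisely when $Z_{e}=1$ for every edge $e$ on the unique path in $\mathcal{T}$ from $0$ to $i$, and the conditional-independence assumption in the model is exactly the statement that these path-events are driven by such independent edge variables. Since every edge of $\mathcal{T}$ lies on the path from $0$ to its endpoint farther from the root, requiring all agents to be informed is the same as requiring every edge to transmit:
\[
Y^{\ast}=\bigcap_{i\in\mathcal{I}}Y_{i}=\{\,Z_{e}=1\text{ for every edge }e\text{ of }\mathcal{T}\,\}.
\]
Hence $\Pr[Y^{\ast}\mid G]=\prod_{e}(1-\varepsilon)=(1-\varepsilon)^{I}$ by independence and the edge count.

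An alternative route, which avoids dwelling on the probability-space construction, is induction on $I$. For $I=1$ the single agent is the seed and is informed with probability $1-\varepsilon$. For the inductive step, pick a leaf $\ell$ of $\mathcal{T}$ (one exists as $\mathcal{T}$ is finite) and let $p$ be its immediate predecessor; deleting $\ell$ leaves a forest $F^{\prime}$ on $I-1$ nodes with the same seeds (or one fewer tree, if $\ell$ was an isolated seed). As $\ell$ has no successors, $Y^{\ast}_{F}=Y^{\ast}_{F^{\prime}}\cap\{p\text{'s message to }\ell\text{ is received}\}$, and conditional on $Y^{\ast}_{F^{\prime}}$ (in particular on $p$ being informed) the latter event has probability $1-\varepsilon$ independently of everything else, so $\Pr[Y^{\ast}_{F}\mid G]=(1-\varepsilon)\Pr[Y^{\ast}_{F^{\prime}}\mid G]=(1-\varepsilon)^{I}$ by the inductive hypothesis.

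The only point requiring care is the independence bookkeeping: the model posits independence of a node's outgoing transmissions only conditionally on that node being informed, so one must check this suffices to factor $\Pr[Y^{\ast}\mid G]$ into a product over edges. Both routes handle this --- the first by positing the equivalent edge-variable model up front, the second by peeling one edge at a time so that only a single conditional-independence step is invoked at each stage. Everything else is routine.
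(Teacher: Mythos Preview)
Your proposal is correct. Your alternative route---induction on $I$ by peeling off a leaf---is exactly the paper's proof, so on that count you have reproduced the original argument. Your primary route is a mild but genuine variant: you recast the model as independent Bernoulli edge-variables on the $I$ edges of the rooted information tree $\mathcal{T}$, observe that $Y^{\ast}$ is precisely the event that every edge transmits, and take the product directly. This buys a one-line non-inductive computation and makes the role of the tree structure (exactly one incoming edge per agent) explicit, at the modest cost of asserting up front that the model's conditional-independence assumptions are equivalent to the edge-variable description; the paper's induction avoids that reformulation by invoking only a single conditional-independence step at each stage, which is closer to the model as literally stated. Either way the content is the same, and you have correctly flagged the only delicate point.
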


\begin{proof}
The proof is by induction on $I.$

For $I=1,$ clearly the probability $\Pr \left[ Y^{\ast }\mid G\right]
=1-\varepsilon .$

Now suppose that for any forest with $I-1$ agents%
\[
\Pr \left[ \cap _{i=1}^{I-1}Y_{i}\mid G\right] =\left( 1-\varepsilon \right)
^{I-1} 
\]%
In the forest with $I$ agents, let $I$ be a leaf (a terminal node) of some
tree in the forest and let the unique direct predecessor of $I$ be $I-1.$
Then since%
\[
\Pr \left[ \cap _{i=1}^{I}Y_{i}\mid G\right] =\Pr \left[ \cap
_{i=1}^{I-1}Y_{i}\mid G\right] \times \Pr \left[ Y_{I}\mid \cap
_{i=1}^{I-1}Y_{i},G\right] 
\]%
and $\Pr \left[ Y_{I}\mid \cap _{i=1}^{I-1}Y_{i},G\right] =1-\varepsilon ,$
the claim is established.\bigskip
\end{proof}

A simple consequence of the previous result is

\begin{lemma}
\label{L: Pr Y star given Y1}For any forest, 
\[
\Pr \left[ Y^{\ast }\mid Y_{1}\right] =\left( 1-\varepsilon \right) ^{I-1} 
\]
\end{lemma}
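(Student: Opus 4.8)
The plan is to deduce this directly from Lemma~\ref{L: Forest Prob}. The key observation is that $Y^{\ast} = \cap_{i \in \mathcal{I}} Y_{i}$ is contained in $Y_{1}$, so by the definition of conditional probability,
\[
\Pr\left[Y^{\ast} \mid Y_{1}\right] = \frac{\Pr\left[Y^{\ast}\right]}{\Pr\left[Y_{1}\right]}.
\]
Since messages are sent only in state $g$, we have $Y^{\ast} \subseteq G$, so $\Pr\left[Y^{\ast}\right] = \Pr\left[Y^{\ast} \mid G\right] \cdot \rho = (1-\varepsilon)^{I}\rho$ by Lemma~\ref{L: Forest Prob}. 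Similarly, agent $1$ is a seed, so $1$ receives a message if and only if the state is $g$ and the single transmission from the planner to $1$ is not lost; hence $\Pr\left[Y_{1}\right] = \rho(1-\varepsilon)$. Dividing gives $\Pr\left[Y^{\ast} \mid Y_{1}\right] = (1-\varepsilon)^{I-1}$.

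I would write the argument in this order: first note $Y^{\ast} \subseteq Y_{1}$ so the conditional probability is a ratio of unconditional probabilities; second compute the numerator using $Y^{\ast} \subseteq G$ together with Lemma~\ref{L: Forest Prob}; third compute the denominator using the fact that $1$ is a seed; and finally cancel the common factors $\rho$ and $(1-\varepsilon)$.

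There is essentially no obstacle here---this is a one-line corollary of Lemma~\ref{L: Forest Prob} once one recognizes that being a seed makes $\Pr\left[Y_{1}\right]$ trivial to compute. The only small point worth stating carefully is why $\Pr\left[Y_{1}\right] = \rho(1-\varepsilon)$ regardless of the forest structure: this holds because the event $Y_{1}$ depends only on the state being $g$ and on the one link from the planner to the seed, both of which are independent of everything downstream in the tree. One could alternatively derive the result by iterating the ratio identity~(\ref{Pr ratio}) from Lemma~\ref{L: Ystar given Yk} along the path from the seed, together with $\Pr\left[Y^{\ast} \mid Y_{K}\right] = 1$ at a terminal node where all agents lie on a single path---but that only works directly for a line, so the ratio-of-unconditionals route via Lemma~\ref{L: Forest Prob} is cleaner for a general forest.
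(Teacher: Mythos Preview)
Your argument is correct. The paper's own proof is even terser---it simply says ``the proof just mimics the proof of Lemma~\ref{L: Forest Prob},'' meaning one re-runs the same induction on $I$ but conditions on $Y_{1}$ rather than on $G$ (equivalently, treats agent~$1$ as the root in place of the planner). You instead deduce the result from the \emph{statement} of Lemma~\ref{L: Forest Prob} via the ratio $\Pr[Y^{\ast}]/\Pr[Y_{1}]$, using $\Pr[Y_{1}]=\rho(1-\varepsilon)$ for a seed. Both routes are one-liners; yours avoids repeating the induction at the small cost of invoking the seed-specific computation of $\Pr[Y_{1}]$, while the paper's route works uniformly by structural recursion. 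Either is fine here.
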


\begin{proof}
The proof just mimics the proof of Lemma \ref{L: Forest Prob}.\bigskip
\end{proof}

Combined with the fact that for successive agents along the path from $1$ to 
$K,$ 
\[
\Pr \left[ Y^{\ast }\mid Y_{k-1}\right] =\left( 1-\varepsilon \right) \times
\Pr \left[ Y^{\ast }\mid Y_{k}\right] 
\]
(see (\ref{Pr ratio})), Lemma \ref{L: Pr Y star given Y1} implies that for
all $k,$ 
\begin{equation}
\Pr \left[ Y^{\ast }\mid Y_{k}\right] =\left( 1-\varepsilon \right) ^{I-k}
\label{power I-k}
\end{equation}

\section{Appendix: Other networks and seedings}

\subsection{Cycles\label{A: Nontree}}

For the cyclical network of Section \ref{S: Nontree} we have

\begin{claim}
Suppose $\frac{1-\varepsilon }{2-\varepsilon }<c<\left( 1-\varepsilon
\right) ^{4}$ and $\frac{\rho \varepsilon }{\rho \varepsilon +1-\rho }<c.$
Then with the cycle network, there is an equilibrium in which $i=1,2,4$
adopt if and only if informed and $3$ adopts if only if he is informed via
both $2$ and $4.$
\end{claim}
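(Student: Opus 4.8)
The plan is to verify directly that the proposed strategy profile is a Bayes--Nash equilibrium, checking each agent's behaviour at each of her information sets. Write $\sigma^{\ast}$ for the profile in which $1$, $2$ and $4$ play $a_i=1$ if and only if informed, while $3$ plays $a_3=1$ exactly on the event $Y_3^{2\wedge 4}$ (informed via both $2$ and $4$) and abstains on $Y_3^2$, $Y_3^4$ and $N_3$. Since $\rho,\varepsilon\in(0,1)$, every information state occurs with positive probability, so all conditional beliefs are well defined. The key reduction I would use throughout: the gross payoff is $1$ only when $\theta=g$ \emph{and} all four agents adopt, any received message is conclusive evidence of $\theta=g$, and $Y_i\subseteq G$; hence at every information state the best-response check amounts to comparing $c$ with the probability, computed under $\sigma^{\ast}$, that all four agents adopt.

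First I would dispose of the routine information sets. For an informed agent $1$: under $\sigma^{\ast}$ all four adopt exactly when the four independent transmissions $1\to 2$, $1\to 4$, $2\to 3$, $4\to 3$ succeed, probability $(1-\varepsilon)^4>c$, so $1$ adopts. For an informed agent $2$ (and $4$ by symmetry): $1$ is then informed too, and all four adopt iff $1\to 4$, $2\to 3$, $4\to 3$ succeed, probability $(1-\varepsilon)^3\ge(1-\varepsilon)^4>c$, so $2$ adopts. On $Y_3^{2\wedge 4}$ agent $3$ knows $1$, $2$, $4$ are all informed, so all four adopt with probability one, and $c<(1-\varepsilon)^4<1$, so $3$ adopts. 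For an uninformed agent in $\{1,2,4\}$: if $1$ is uninformed then so are $2,3,4$ and none adopts; if $2$ (resp.\ $4$) is uninformed then $3$ cannot have heard from $2$ (resp.\ $4$) and so does not adopt; in each case ``all four adopt'' is impossible, so adopting gives $-c<0$ and the agent abstains.

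The substantive computations are agent $3$'s remaining two information sets. On $Y_3^2$ agent $3$ knows $1$ and $2$ are informed, so were he to adopt, all four would adopt exactly on $Y_4$; I would compute $\Pr[Y_4\mid Y_3^2]$. The only delicacy is that ``$3$ hears from $2$'' and ``$3$ hears from $4$'' are positively correlated through the common seed $1$, so I would condition on $1$ being informed, after which the two channels become independent; this gives $\Pr[Y_3^2]=\rho(1-\varepsilon)^3\varepsilon(2-\varepsilon)$ and $\Pr[Y_3^2\cap Y_4]=\rho(1-\varepsilon)^4\varepsilon$, hence $\Pr[Y_4\mid Y_3^2]=(1-\varepsilon)/(2-\varepsilon)$ --- exactly the lower threshold in the hypothesis --- so adopting is not optimal and $3$ abstains; $Y_3^4$ is symmetric. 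On $N_3$, if $3$ deviated to adopt then all four would adopt exactly on $Y_1\cap Y_2\cap Y_4$; here $\Pr[Y_1\cap Y_2\cap Y_4\cap N_3]=\rho(1-\varepsilon)^3\varepsilon^2$ while $\Pr[N_3]\ge(1-\rho)+\rho\varepsilon$ (as $N_3$ holds whenever $\theta=b$, or $\theta=g$ with $0\to 1$ lost), so
\[
\Pr[Y_1\cap Y_2\cap Y_4\mid N_3]\ \le\ \frac{\rho\varepsilon^2}{1-\rho+\rho\varepsilon}\ =\ \varepsilon\cdot\frac{\rho\varepsilon}{\rho\varepsilon+1-\rho}\ <\ \varepsilon c\ <\ c ,
\]
using the prior-belief hypothesis $\rho\varepsilon/(\rho\varepsilon+1-\rho)<c$; so $3$ abstains. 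With every prescribed action shown to be a best response, $\sigma^{\ast}$ is an equilibrium, which proves the claim.

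I expect the main obstacle to be not any single hard estimate but the bookkeeping forced by the single common seed: information leaves $1$ through two branches that reconverge at $3$, so agent $3$'s information partition has four cells and the various agents' ``informed'' events are dependent; the independence exploited in the acyclic case is recovered only after conditioning on the status of $1$. The other thing to get right is the division of labour among the three hypotheses --- $c<(1-\varepsilon)^4$ makes every informed agent willing to adopt, $c>(1-\varepsilon)/(2-\varepsilon)$ exactly defeats agent $3$'s single-source deviation, and $\rho\varepsilon/(\rho\varepsilon+1-\rho)<c$ defeats his no-source deviation.
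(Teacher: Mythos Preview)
Your proof is correct and follows essentially the same route as the paper: verify the profile information-set by information-set, with the crucial computation being $\Pr[Y_4\mid Y_3^2]=(1-\varepsilon)/(2-\varepsilon)$. The one noteworthy difference is how you handle the uninformed types: the paper argues that adoption by $N_1,N_2,N_4$ is \emph{iteratively dominated} (invoking Lemma~\ref{L: Yk-1 given Nk}), whereas you observe directly that, against $\sigma^{\ast}$, an uninformed $1$, $2$ or $4$ forces $3\notin Y_3^{2\wedge 4}$ so the ``all adopt'' event has probability zero---a cleaner best-response check that suffices for the claim as stated, though the paper's dominance argument is what later pins down this equilibrium as the one with the highest adoption probability. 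Your treatment of $N_3$ is also more explicit than the paper's.
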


\begin{proof}
First, since $c>\Pr \left[ G\mid N_{1}\right] =\frac{\rho \varepsilon }{\rho
\varepsilon +1-\rho }$ it is iteratively dominated for any uninformed agent
to adopt. Clearly, it is dominated for $N_{1}$ to adopt. Given this, Lemma %
\ref{L: Yk-1 given Nk}, it is iteratively dominated for $N_{2}$ and $N_{4}$
to adopt.

Second, given that $N_{1},N_{2}$ and $N_{4}$ do not adopt, it is iteratively
dominated for $Y_{3}^{2}$ to adopt. This is because if $3$ got a message
from $2,$ she is sure that both $1$ and $2$ are informed. The only
uncertainty she faces concerns agent $4$ and it is easily verified that $\Pr %
\left[ Y_{4}\mid Y_{3}^{2}\right] =\left( 1-\varepsilon \right) /\left(
2-\varepsilon \right) .$ Since this is smaller than $c,$ $Y_{3}^{2}$ should
not adopt. By interchanging the roles of $2$ and $4,$ we infer that $%
Y_{3}^{4}$ should not adopt either.

Third, given the specified strategies, it is a best-response for $Y_{1}$ to
adopt since the probability that all others adopt is $\Pr \left[ Y_{2}\cap
Y_{4}\cap Y_{3}^{2\wedge 4}\mid Y_{1}\right] =\left( 1-\varepsilon \right)
^{4}.$ To see this note that 
\begin{eqnarray}
\Pr \left[ Y_{1}\cap Y_{2}\cap Y_{4}\cap Y_{3}^{2\wedge 4}\right] &=&\Pr %
\left[ Y_{1}\right] \Pr \left[ Y_{2}\cap Y_{4}\mid Y_{1}\right] \Pr \left[
Y_{3}^{2\wedge 4}\mid Y_{2}\cap Y_{4}\right]  \label{Pr cycle} \\
&=&\Pr \left[ Y_{1}\right] \times \left( 1-\varepsilon \right) ^{2}\times
\left( 1-\varepsilon \right) ^{2}  \nonumber
\end{eqnarray}%
Since $c<\left( 1-\varepsilon \right) ^{4},$ it is a best-response for $%
Y_{1} $ to adopt.

Given the strategies, $Y_{2},Y_{4}$ and $Y_{3}^{2\wedge 4}$ are all more
optimistic than $Y_{1}$ about the event that everyone will adopt. So they
too will adopt.\bigskip
\end{proof}

From (\ref{Pr cycle}), the probability that everyone will adopt in the
equilibrium described in the claim above is just $\left( 1-\varepsilon
\right) ^{5}$ and this is the highest achievable since $N_{1},$ $N_{2},$ $%
N_{4},$ $Y_{3}^{2}$ and $Y_{3}^{4}$ do not adopt in any equilibrium.

In the line network (or any tree), the corresponding probability is $\left(
1-\varepsilon \right) ^{4}$.

\subsection{Multiple seeds\label{A: Mult}}

When both $1$ and $2$ are seeds, let $Y_{i}^{0}$ denote the event that $i$
heard only directly from the planner, $Y_{i}^{j}$ the event that $i$ heard
only from agent $j=3-i,$ and $Y_{i}^{0\wedge j}$ the event that $i$ heard
from both the planner and $j\neq i.$ Finally, let $N_{i}$ be the event that $%
i$ hears from neither. Let $Y_{i}$ denote the event that $i$ heard from
either source, that is, $Y_{i}=Y_{i}^{0}\cup Y_{i}^{j}\cup Y_{i}^{0\wedge
j}. $

In effect, there are now four types of each agent and thus the states of the
world are more complicated since they specify not only whether or not $i$ is
informed but the source of her information. Let $\Omega _{M}$ be the states
of the world for the example when there are multiple (two) seeds.

With multiple seeds, 
\begin{eqnarray}
\mathbb{P}_{M}\left[ G\mid N_{i}\right] &=&\frac{\rho \varepsilon ^{2}+\rho
\varepsilon \left( 1-\varepsilon \right) \varepsilon }{\rho \varepsilon
^{2}+\rho \varepsilon ^{2}\left( 1-\varepsilon \right) +1-\rho }  \nonumber
\\
&=&\frac{\rho \varepsilon ^{2}\left( 2-\varepsilon \right) }{\rho
\varepsilon ^{2}\left( 2-\varepsilon \right) +1-\rho }  \label{PrM G N}
\end{eqnarray}%
where the probabilities $\mathbb{P}_{M}$ are now determined in the network
with multiple seeds. The numerator is the probability that if $\theta =g,$
neither hears from $0$ (probability $\varepsilon ^{2}$) plus the probability
that $i$ does not hear from $0$ (probability $\varepsilon $) but $j$ does
(probability $1-\varepsilon $) and then $j$'s message is lost (probability $%
\varepsilon $). The denominator is just the probability that $i$ hears from
neither source.

First, note that if $1$ hears directly only from $0,$ then her belief that $%
2 $ is informed is%
\begin{eqnarray*}
\mathbb{P}_{M}\left[ Y_{2}\mid Y_{1}^{0}\right] &=&\left( 1-\varepsilon
\right) +\varepsilon \left( 1-\varepsilon \right) \\
&=&1-\varepsilon ^{2}
\end{eqnarray*}%
and so from (\ref{PrM G N}) it follows that for $\varepsilon $ small enough,%
\[
\mathbb{P}_{M}\left[ G\mid N_{1}\right] <\mathbb{P}_{M}\left[ Y_{2}\mid
Y_{1}^{0}\right] 
\]

\begin{claim}
With multiple seeds, if $\mathbb{P}_{M}\left[ G\mid N_{1}\right] <c\leq 
\mathbb{P}_{M}\left[ Y_{2}\mid Y_{1}^{0}\right] ,$ then there is an
equilibrium in which agents adopt if and only if they get a message from
either source. This equilibrium Pareto dominates all other equilibria.
\end{claim}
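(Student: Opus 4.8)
The plan is to mirror, over the richer type space, the proofs of Proposition \ref{P: eqm small} and Corollary \ref{C: Pareto}. In $\Omega_M$ each agent $i\in\{1,2\}$ has four informational types---$N_i$, $Y_i^0$, $Y_i^j$ and $Y_i^{0\wedge j}$ (with $j=3-i$)---and a behavior strategy assigns an adoption probability to each. Let $\hat\sigma$ be the profile in which every agent adopts at each informed type ($Y_i^0$, $Y_i^j$, $Y_i^{0\wedge j}$) and does not adopt at $N_i$; this is the ``adopt if and only if informed'' profile. I would prove (a) that $\hat\sigma$ is an equilibrium, and (b) that $\hat\sigma$ Pareto dominates every equilibrium.

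For (a), first note that adopting is \emph{strictly} dominated for an uninformed agent: irrespective of the opponent's play, the gross expected payoff from adopting is at most $\mathbb{P}_M[\theta=g\mid N_i]$, which by symmetry equals $\mathbb{P}_M[G\mid N_1]$ and is below $c$ by hypothesis, so not adopting is strictly better. (In contrast to the single-seed argument of Proposition \ref{P: eqm small}, no iteration is needed: with two agents each uninformed type's belief in $G$ is directly below $c$.) For an informed agent of type $Y_i^j$ or $Y_i^{0\wedge j}$, the agent has heard from the opponent and hence knows the opponent is informed; since the opponent adopts whenever informed under $\hat\sigma$, adopting yields net payoff $1-c>0$ and is strictly optimal. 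The binding case is type $Y_i^0$: such an agent knows $\theta=g$ and assigns probability $\mathbb{P}_M[Y_j\mid Y_i^0]=1-\varepsilon^2$ to the opponent being informed, hence adopting; since $c\le\mathbb{P}_M[Y_2\mid Y_1^0]$ by hypothesis, adopting is a (weak) best response. Every type's prescribed action being a best response, $\hat\sigma$ is an equilibrium of the claimed form.

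For (b) I would follow Case 2 of Corollary \ref{C: Pareto}. Fix any equilibrium $\sigma$. Since adopting when uninformed is strictly dominated, both agents' $N_i$-types play $0$ in $\sigma$; therefore, replacing the opponent's $\sigma$-strategy by his $\hat\sigma$-strategy can only weakly raise, in every state, the probability the opponent adopts. Positive externalities---the gross payoff is nondecreasing in the opponent's realized action---then give $u_i(\sigma_i,\sigma_{-i})\le u_i(\sigma_i,\hat\sigma_{-i})$ with $i$'s own strategy held fixed; and since $\hat\sigma_i$ is a best response for $i$ against $\hat\sigma_{-i}$ by part (a), $u_i(\sigma_i,\hat\sigma_{-i})\le u_i(\hat\sigma_i,\hat\sigma_{-i})$. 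Hence $u_i(\sigma)\le u_i(\hat\sigma)$ for $i=1,2$. For strictness over, e.g., the ``nobody adopts'' equilibrium, one checks $u_i(\hat\sigma)=\mathbb{P}_M[Y_1\cap Y_2]-\mathbb{P}_M[Y_i]\,c\ge 0$, using that $\mathbb{P}_M[Y_1\cap Y_2\mid Y_i]=\mathbb{P}_M[Y_j\mid Y_i]$ is a weighted average of $1-\varepsilon^2$ and $1$, hence at least $1-\varepsilon^2\ge c$.

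The main obstacle is not computational but a matter of getting the type analysis right: one must recognize $Y_i^0$---the agent who heard only from the planner---as the least-informed \emph{informed} type, so that the single hypothesis $c\le\mathbb{P}_M[Y_2\mid Y_1^0]$ is exactly what is needed to sustain adoption by \emph{all} informed types, and one must verify that enlarging the type space leaves intact the positive-externality monotonicity used for Pareto dominance---which it does, since that monotonicity concerns realized action profiles, not types. With these observations in place, the rest is a routine transcription of the arguments for Proposition \ref{P: eqm small} and Corollary \ref{C: Pareto}.
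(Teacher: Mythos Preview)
Your proposal is correct and follows the same approach as the paper: the equilibrium existence argument (dominance for $N_i$, then checking that $Y_i^0$ is the binding informed type via $c\le\mathbb{P}_M[Y_2\mid Y_1^0]$, with the remaining informed types trivially willing to adopt) is exactly what the paper does. You go further than the paper's own proof by actually supplying the Pareto-dominance step, which the paper asserts in the claim but does not argue; your argument there correctly transcribes the positive-externality/best-response chain of Corollary~\ref{C: Pareto} to the four-type setting.
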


\begin{proof}
If $i$ does not get a message, then his belief about the event $G$ is
smaller than the cost and so it is dominant to not adopt.

Suppose $2$ adopts whenever she is informed. Now since, $c\leq \Pr \left[
Y_{2}\mid Y_{1}^{0}\right] ,$ if $1$ gets a signal only from $0,$ he will
adopt. And if $1$ hears from $2$, then he knows that $2$ is also informed
and so will also adopt.\bigskip
\end{proof}

For the range of costs in the claim above, with multiple seeds, the
resulting equilibrium payoff can be calculated as follows.

The probability that both adopt in the event $G$ is%
\begin{eqnarray*}
\mathbb{P}_{M}\left[ Y_{1}\cap Y_{2}\mid G\right] &=&2\left( 1-\varepsilon
\right) ^{2}\varepsilon +\left( 1-\varepsilon \right) ^{2} \\
&=&\left( 1-\varepsilon \right) ^{2}\left( 1+2\varepsilon \right)
\end{eqnarray*}%
The probability that $1$ adopts and $2$ doesn't is%
\[
\mathbb{P}_{M}\left[ Y_{1}\cap N_{2}\mid G\right] =\left( 1-\varepsilon
\right) \varepsilon ^{2} 
\]%
Since adoption occurs only in the event $G,$ the ex ante equilibrium payoff
of either agent when both are seeds is%
\begin{equation}
\pi _{M}=\rho \left( 1-\varepsilon \right) ^{2}\left( 1+2\varepsilon \right)
\left( 1-c\right) +\rho \left( 1-\varepsilon \right) \varepsilon ^{2}\left(
-c\right)  \label{mult payoff}
\end{equation}

\subparagraph{Single seed}

If $1$ is the only seed, then we are in the situation studied in Section \ref%
{S: Game} and let $\mathbb{P}$ denote the resulting probability distribution
over $\Omega .$ Proposition \ref{P: small} (Case 1) implies that

\begin{claim}
With a single seed, if $c<\mathbb{P}\left[ G\mid N_{1}\right] ,$ then there
is an equilibrium in which everyone adopts regardless of information.
\end{claim}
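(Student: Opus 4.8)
The plan is simply to verify that the profile in which both agents adopt no matter what is a Nash equilibrium of the two-agent adoption game with $1$ as the sole seed; the claim is then the specialization of Case~1 of Proposition~\ref{P: small} (equivalently, Case~1 of Proposition~\ref{P: eqm small}) to $I=2$. Since adoption exerts positive externalities and pays the gross reward $1$ only when $\theta=g$ and \emph{both} agents adopt, the expected payoff to agent $i$ from adopting, given that the other agent always adopts, is $\mathbb{P}[G\mid x_i]-c$, while not adopting yields $0$. Hence ``always adopt'' is a mutual best response precisely when $\mathbb{P}[G\mid x_i]\geq c$ for both $i$ and for each realization $x_i\in\{y,n\}$.

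First I would dispose of the informed realizations: an agent who receives a message knows $\theta=g$, so $\mathbb{P}[G\mid Y_i]=1>c$ for $i=1,2$. For the uninformed seed, the hypothesis $c<\mathbb{P}[G\mid N_1]$ is exactly the required inequality. For the uninformed downstream agent $2$, I would invoke Lemma~\ref{L: G given Nk}: the sequence $\Pr[G\mid N_k]$ is increasing in $k$, and since $2$ lies one step further along the unique path from the seed, $\mathbb{P}[G\mid N_2]\geq\mathbb{P}[G\mid N_1]>c$. This exhausts all cases, so the profile is an equilibrium; because adoption then occurs on all of $\Omega$, the probability that everyone adopts in this equilibrium equals $1$.

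There is no genuine obstacle here. The only point that needs care is that the bound $\mathbb{P}[G\mid N_i]\geq c$ must hold for \emph{every} uninformed agent, and among these the uninformed seed is the least convinced that $\theta=g$ (Lemma~\ref{L: G given Nk}); hence once $c<\mathbb{P}[G\mid N_1]$, monotonicity along the path delivers the inequality for the remaining uninformed agent automatically. This is also why the seed's belief $\mathbb{P}[G\mid N_1]$, rather than any other agent's, is the relevant threshold for the comparison with the multiple-seed regime carried out above.
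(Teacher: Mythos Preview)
Your proposal is correct and matches the paper's approach exactly: the paper simply invokes Case~1 of Proposition~\ref{P: eqm small}, whose proof is precisely the best-response check you carry out, using Lemma~\ref{L: G given Nk} to show that the uninformed seed is the least optimistic uninformed agent and hence the binding case.
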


The equilibrium payoff of an agent in the equilibrium of the claim is%
\begin{equation}
\pi =\rho -c  \label{single payoff}
\end{equation}

Finally, note that when $\varepsilon $ is small, 
\[
\mathbb{P}_{M}\left[ G\mid N_{1}\right] <\mathbb{P}\left[ G\mid N_{1}\right]
<\mathbb{P}_{M}\left[ Y_{2}\mid Y_{1}^{0}\right] 
\]

So if $c$ is such that%
\[
\mathbb{P}_{M}\left[ G\mid N_{1}\right] <c\leq \mathbb{P}\left[ G\mid N_{1}%
\right] 
\]%
the conditions of both claims above are satisfied. From (\ref{mult payoff})
and (\ref{single payoff}) 
\[
\pi _{M}-\pi =\rho \left( 1-\varepsilon \right) ^{2}\left( 1+2\varepsilon
\right) \left( 1-c\right) +\rho \left( 1-\varepsilon \right) \varepsilon
^{2}\left( -c\right) -\left( \rho -c\right) 
\]
and it may be verified that when $c\approx \mathbb{P}_{M}\left[ G\mid N_{1}%
\right] ,$ the difference is negative.

\subsection{Random seeds\label{A: Random}}

Denote by $\mathbb{P}^{1}$ the probability distribution over $\Omega $ when $%
1$ is the seed, by $\mathbb{P}^{2}$ the probability distribution over $%
\Omega $ when $2$ is the seed, and by $\widetilde{\mathbb{P}}$ the
probability distribution over $\Omega $ when the seeds are randomly chosen

Consider agent $1$ when uninformed. The probability that this agent assigns
to $G$ is%
\begin{eqnarray*}
\widetilde{\mathbb{P}}\left[ G\mid N_{1}\right] &=&\frac{\rho \left( \frac{1%
}{2}\varepsilon +\frac{1}{2}\left( \varepsilon +\left( 1-\varepsilon \right)
\varepsilon \right) \right) }{\rho \left( \frac{1}{2}\varepsilon +\frac{1}{2}%
\left( \varepsilon +\left( 1-\varepsilon \right) \varepsilon \right) \right)
+1-\rho } \\
&=&\frac{\frac{1}{2}\rho \varepsilon \left( 3-\varepsilon \right) }{\frac{1}{%
2}\rho \varepsilon \left( 3-\varepsilon \right) +1-\rho }
\end{eqnarray*}%
To see why, note that the numerator, the probability of $G\cap N_{1}$,
involves three possibilities in state $G$: (i) $1$ is the seed (probability $%
\frac{1}{2}$) and the message from the planner to $1$ was lost (probability $%
\varepsilon $); (ii) $2$ is the seed and the message from the planner to $2$
was lost; and (iii) $2$ is the seed, the message from the planner to $2$ was
received but then lost when sent to $1$. The denominator takes into account
the fact that when $\theta =b,$ no messages are sent.

Also, recall from (\ref{Pr[G given N1]}) that 
\[
\mathbb{P}^{1}\left[ G\mid N_{1}\right] =\frac{\rho \varepsilon }{\rho
\varepsilon +1-\rho } 
\]%
and note that $\mathbb{P}^{1}\left[ G\mid N_{1}\right] <\widetilde{\mathbb{P}%
}\left[ G\mid N_{1}\right] .$

Let $\varepsilon $ be small enough so that $\widetilde{\mathbb{P}}\left[
G\mid N_{1}\right] <\mathbb{P}^{1}\left[ Y^{\ast }\mid Y_{1}\right] .$

\begin{claim}
If 
\[
\mathbb{P}^{1}\left[ G\mid N_{1}\right] <c<\widetilde{\mathbb{P}}\left[
G\mid N_{1}\right] 
\]%
then with random seeding, there exists an equilibrium in which both agents
adopt regardless of whether they are informed or not.
\end{claim}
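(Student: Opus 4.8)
The plan is to show that the profile in which \emph{both} agents adopt unconditionally---whether or not they received a message, and regardless of the source---is an equilibrium of the random-seeding game whenever $c$ lies strictly between $\mathbb{P}^{1}\left[ G\mid N_{1}\right] $ and $\widetilde{\mathbb{P}}\left[ G\mid N_{1}\right] $. The claim asserts only that such an equilibrium exists, so once this profile is verified to be a mutual best response there is nothing more to do; the comparison with the single-seed game (where, by Proposition \ref{P: small}, any cost above $\mathbb{P}^{1}\left[ G\mid N_{1}\right] $ forces uninformed agents not to adopt, capping the probability of universal adoption at $\left( 1-\varepsilon \right) ^{2}<1$) then delivers the statement in the main text, but that comparison is not part of what must be proved here.

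First I would fix an agent $i$ and suppose the other agent adopts unconditionally. Then the only residual uncertainty $i$ faces is the fundamental $\theta $: if $i$ adopts, everyone adopts, so her gross payoff is $1$ when $\theta =g$ and $0$ when $\theta =b$, against a cost $c$; if she does not adopt her payoff is $0$. Hence adopting is a best response exactly when $i$'s posterior on $G$ is at least $c$, and there are only two informational cases to check. If $i$ is informed, a received message is conclusive evidence of $\theta =g$, so the posterior is $1>c$. If $i$ is uninformed, the posterior is $\widetilde{\mathbb{P}}\left[ G\mid N_{i}\right] $, which by the symmetry of the random-seeding rule (each agent is the seed with probability $\tfrac{1}{2}$) equals $\widetilde{\mathbb{P}}\left[ G\mid N_{1}\right] >c$ by hypothesis. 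In both cases adopting is strictly optimal, and the symmetric argument applies to the other agent, so neither agent has a profitable deviation and the unconditional-adoption profile is an equilibrium.

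I do not expect a genuine obstacle. The one point needing a moment's care is the ex-ante symmetry of the two agents under random seeding, which is what lets me replace $\widetilde{\mathbb{P}}\left[ G\mid N_{2}\right] $ by $\widetilde{\mathbb{P}}\left[ G\mid N_{1}\right] $; the value $\widetilde{\mathbb{P}}\left[ G\mid N_{1}\right] $ has already been computed in the paragraph preceding the claim, and the interval $\left( \mathbb{P}^{1}\left[ G\mid N_{1}\right] ,\widetilde{\mathbb{P}}\left[ G\mid N_{1}\right] \right) $ is nonempty precisely because of the strict inequality $\mathbb{P}^{1}\left[ G\mid N_{1}\right] <\widetilde{\mathbb{P}}\left[ G\mid N_{1}\right] $ noted there, which is exactly what furnishes the "open set of costs" in the main-text version of the claim.
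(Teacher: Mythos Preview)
Your proposal is correct and follows essentially the same argument as the paper: fix one agent, suppose the other always adopts so that only fundamental uncertainty remains, verify that an uninformed agent's posterior $\widetilde{\mathbb{P}}\left[G\mid N_{1}\right]$ exceeds $c$ (and that an informed agent's posterior is $1$), then invoke the symmetry of the random seeding to handle the other agent. The paper's proof is terser but the logic is identical.
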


\begin{proof}
Suppose $2$ always adopts. Then the only uncertainty that $N_{1}$ faces is
regarding $G$ and since $c<\widetilde{\mathbb{P}}\left[ G\mid N_{1}\right] ,$
it is optimal for $N_{1}$ to also adopt. Then it is certainly optimal for $%
Y_{1}$ to adopt as well.

Agents $1$ and $2$ are symmetrically placed and so the claim is
established.\bigskip
\end{proof}

When $1$ is the only seed, Proposition \ref{P: small} shows that when $%
\mathbb{P}^{1}\left[ G\mid N_{1}\right] \leq c<\mathbb{P}^{1}\left[ Y^{\ast
}\mid Y_{1}\right] ,$ it is \emph{not} an equilibrium for both agents to
adopt regardless of whether they are informed or not.

Thus, with random seeding, the probability that everyone adopts is greater
than with a single seed. The payoffs of the two agents are also greater with
random seeding.

\end{document}